\newcommand{\bbr}{\mathbb{R}}
\newcommand{\bbn}{\mathbb{N}}
\newcommand{\fn}{\footnote}
\newcommand{\ci}{\cite}
\newcommand{\Alpha}{A}
\newcounter{modcount}
\newcommand{\modulo}[2]{%
\setcounter{modcount}{#1}\relax
\ifnum\value{modcount}<#2\relax
\else\relax
\addtocounter{modcount}{-#2}\relax
\modulo{\value{modcount}}{#2}\relax
\fi}
\newcommand{\tablepictures}[4][c]{\begin{tabular}[#1]{@{}c@{}}#2\vspace{0.5cm}\\(\alph{#4}) #3\end{tabular}}
\newcounter{gridsearch}
\newcommand{\tabpic}[2]{
    \stepcounter{gridsearch}
    \modulo{\thegridsearch}{2}
    \ifnum\value{modcount}=0
        \tablepictures[t]{#1}{#2}{gridsearch}\\[2.0cm]
    \else
        \tablepictures[t]{#1}{#2}{gridsearch}&~&
    \fi
}
\newtheorem{lemma}{Lemma}[section]
\newtheorem{proposition}[lemma]{Proposition}
\newtheorem{corollary}[lemma]{Corollary}
\newtheorem{example1}[lemma]{Example}
\newtheorem{rem1}[lemma]{Remark}
\newtheorem{assumption}[lemma]{Assumption}
\newtheorem{alg1}[lemma]{Algorithm}
\newtheorem{me1}[lemma]{Mechanism}
\newenvironment{remark}{\begin{rem1}\rm}{\end{rem1}}
\newenvironment{example}{\begin{example1}\rm}{\end{example1}}
\newenvironment{alg}{\begin{alg1}\rm}{\end{alg1}}
\newcommand{\T}{\mathsf{T}}
\newcommand{\diag}{\operatorname{diag}}
\DeclareMathOperator*{\argmax}{arg\,max}
\begin{document}

\title{Obligations with Physical Delivery in a Multi-Layered Financial Network\footnote{The author would like to thank Eric Schaanning for his help in calibrating the Eisenberg-Noe network model to the European banking dataset in Example~\ref{Ex:Greece}.}}
\author{Zachary Feinstein\footnote{Zachary Feinstein, ESE, Washington University, St. Louis, MO 63130, USA, {\tt zfeinstein@ese.wustl.edu}.}\\[0.7ex] \textit{Washington University in St. Louis}}
\date{\today}
\maketitle

\begin{abstract}
%
This paper provides a general framework for modeling financial contagion in a system with obligations in multiple illiquid assets (e.g., currencies).  In so doing, we develop a multi-layered financial network that extends the single network of \cite{EN01}.  In particular, we develop a financial contagion model with fire sales that allows institutions to both buy and sell assets to cover their liabilities in the different assets and act as utility maximizers.  

We prove that, under standard assumptions and without market impacts, equilibrium portfolio holdings exist and are unique.  However, with market impacts, we prove that equilibrium portfolio holdings and market prices exist which clear the multi-layered financial system.  In general, though, these clearing solutions are not unique.  We extend this result by considering the t\^atonnement process to find the unique attained equilibrium.  The attained equilibrium need not be continuous with respect to the initial shock; these points of discontinuity match those stresses in which a financial crisis becomes a systemic crisis.  We further provide mathematical formulations for payment rules and utility functions satisfying the necessary conditions for these existence and uniqueness results.

We demonstrate the value of our model through illustrative numerical case studies.  In particular, we study a counterfactual scenario on the event that Greece re-instituted the drachma on a dataset from the European Banking Authority.
\end{abstract}\vspace{0.2cm}
\textbf{Key words:} Systemic risk; financial contagion; fire sales; financial network; t\^atonnement process

\section{Introduction}\label{Sec:intro}

As defined in \cite{feinstein2015illiquid}, ``financial contagion occurs when the distress of one bank jeopardizes the health of other financial firms.''  Many recent works on the topic have focused on modeling aspects of the 2007-2009 financial crisis, as that event proved that systemic crises can have terrible costs.  However, such contagious events have occurred at other times in the recent past, e.g., the 1997 Asian financial crisis.  In that crisis, among others, currency fluctuations between the US dollar and, e.g., Thai baht and Indonesian rupiah caused the debt-to-income ratios of firms to jump.  This caused a positive feedback loop in the currency fluctuations, thus intensifying the contagion.  
In fact, \cite{DLM11} and references therein showed that foreign currency obligations for banks statistically increase the chance of a banking crisis in a nation.
However, in contrast to the financial contagion models of \cite{EN01,feinstein2015illiquid}, many historical financial crisis involved obligations and incomes in multiple currencies (that must be fulfilled in the quoted currency) and illiquidity in the currency markets (see, e.g., \cite{ABK03}).  That is, in a general sense, many historical crises exist as the outcome of a multi-layered financial network of obligations between financial institutions in multiple illiquid assets insofar as they exhibit three key components: (i) distinct networks of interbank obligations in each currency with (ii) intra-layer connections via payments made in the individual currencies and (iii) inter-layer interactions through asset transfers (and price impacts) between the different currencies and layers of the network.  As such, this current paper will focus on an extension of \cite{EN01} to allow for a multi-layered network of obligations, notably allowing for firms to transfer wealth between multiple (illiquid) assets or currencies causing price impacts to the exchange rates.

\cite{EN01} propose a network model for the spread of defaults in the financial system.  In that proposed model, banks hold liquid assets which are used to pay off liabilities; unpaid liabilities may infect additional firms and cause them to default on some of their liabilities as well.  That paper proves conditions for existence and uniqueness of the clearing payments and provides a method for computing the equilibrium clearing payments.  
This model has been extended to account for time dynamics in, e.g., \cite{KV16,CC15,BBF18}.
Additionally, the basic clearing payment model of \cite{EN01} has been relaxed to consider bankruptcy costs (e.g., \cite{E07,RV13,EGJ14,GY14,AW_15,CCY16}) and cross-holdings (e.g., \cite{E07,EGJ14,AW_15}).
Illiquid assets and fire sale dynamics have been included in the setting of such network models in, e.g., \cite{CFS05,NYYA07,GK10,AFM13,CLY14,AW_15,AFM16} for a single (representative) asset and \cite{feinstein2015illiquid,feinstein2016leverage} for multiple assets. 
Empirical studies of the aforementioned financial contagion models have been conducted in, e.g., \cite{ELS06,U11,CMS10,GY14}.  One of the key contributions of these works is the conclusion that the local connections, via contractual liabilities, do not capture most financial contagion.  This motivates our current study for considering the role of illiquid assets and currencies on financial contagion.
Measures of systemic risk have been studied in, e.g., \cite{chen2013axiomatic,kromer2013systemic,feinstein2014measures,BFFM2015,ACDP2015}.

The main advance that we wish to study in these models is a more complete picture of how illiquid assets impact systemic risk; in particular, we are concerned with the implications of physical obligations in multiple currencies.  Within this scope, prior work has focused on fire sales in which the various financial firms will liquidate their assets in case of a cash shortfall thus driving down the asset value.  
Further, we will demonstrate that such modeling inherently produces systemic crises due to the market switching the attained equilibria and thus having jumps in the response of banks and the market.
However, to the best of our knowledge, none of the prior works in the \cite{EN01} setting permit a multi-layered or interconnected financial network of obligations with liabilities in multiple assets and payments in physical assets rather than in some num\'eraire.  We refer to \cite{BK16OFRbrief} for a discussion of why multi-layered systems are important and can affect financial contagion; this is of particular importance due to the role that currency movements have on systemic risk (\cite{DLM11}).  In such a model, firms no longer only have obligations in the num\'{e}raire asset (cash) only, but in, e.g., multiple currencies or securities.  We refer to~\cite{MK13,BCD16,poledna15multilayer} for other approaches to modeling interconnected financial networks.
We will tackle this problem, and extend works further, by allowing for \emph{solvent} firms to invest via a utility maximization problem, thereby permitting such firms to purchase assets at the fire sale price.  

The systemic risk studied in this paper, insofar as it relates to currency crises, should be viewed as studying extreme events such as the abandonment of a currency peg.  If symmetry exists to balance the buying and selling of a currency (due to the notion of ``buy low and sell high'' as depicted by the wealth maximizing utility in Example~\ref{Ex:2asset_compare}), the exchange rates will generally be very stable as illustrated by low volatility in the foreign exchange markets.  However, when an asymmetry exists between initial holdings (in a local currency) and obligations (in a major currency, e.g., US dollars), and with atypical monetary policy, the multilayered network can cause large fluctuations in exchange rates.  It is this latter scenario we concern ourselves with in this work.

Notably, we will prove only the existence of clearing solutions in markets with price impacts.  This is comparable to, e.g.,~\cite{CFS05}, in which the clearing payments are not necessarily unique due to the introduction of fire sales.  In order to determine the attained clearing solution, we introduce the use of the t\^atonnement process which has previously been used to study financial contagion in~\cite{BFK14}.  Of particular interest, due to the nonuniqueness of the clearing solutions, the attained equilibria need not be continuous with respect to the initial shock.  Specifically, this jump implies that a small perturbation in shock can greatly influence the attained clearing solution.  This has far reaching consequences for stress testing as discrete stresses may miss this point of discontinuity and thus underestimate the true risk of financial contagion.

The organization of this paper is as follows.  First, in Section~\ref{Sec:setting}, we will introduce the mathematical and financial setting.  In Section~\ref{Sec:model}, we first develop the general modeling framework.  We consider markets without price impacts (Section~\ref{Sec:monotonic}) followed by markets with price impacts (Section~\ref{Sec:mAsset}).  We find conditions so that there exists a clearing solution (Section~\ref{Sec:mAsset}) and consider the t\^atonnoment process to find the attained market equilibrium (Section~\ref{Sec:attained}). These are the main results of this work.  Section~\ref{Sec:examples} is used to provide specific mathematical examples with meaningful financial interpretations that fit the results of Sections~\ref{Sec:model}.  
Numerical case studies are provided in Section~\ref{Sec:casestudy}.  In particular, beyond demonstrating the impact of differing choices of payment rules and utility parameters on the equilibrium response in toy models, we provide a numerical case study to consider the impacts on contagion of having a single currency split into two.  This is meaningful as it has been threatened in recent years for the Greek economy; in studying the so-called Grexit event, we calibrate the financial system to 2011 stress testing data from the European Banking Authority.
The proofs of the theoretical results are provided in the appendix.

\section{The stylized balance sheet}\label{Sec:setting}
\begin{figure}[h!]
\centering
\begin{subfigure}[b]{.47\textwidth}
\begin{tikzpicture}[xscale=1.15]
\draw[draw=none] (0,8.5) rectangle (6,9) node[pos=.5]{\bf Marked-to-Market Book Value};
\draw[draw=none] (0,8) rectangle (3,8.5) node[pos=.5]{\bf Assets};
\draw[draw=none] (3,8) rectangle (6,8.5) node[pos=.5]{\bf Liabilities};

\filldraw[fill=blue!20!white,draw=black] (0,6) rectangle (3,8) node[pos=.5,style={align=center}]{Endowment (Asset 1) \\ $q_1 x_i^1$};
\filldraw[fill=blue!10!white,draw=black] (0,3.75) rectangle (3,6) node[pos=.5,style={align=center}]{Interbank (Asset 1) \\ $q_1 \sum_{j = 1}^n L_{ji}^1$};
\filldraw[fill=violet!20!white,draw=black] (0,2.5) rectangle (3,3.75) node[pos=.5,style={align=center}]{Endowment (Asset 2) \\ $q_2 x_i^2$};
\filldraw[fill=violet!10!white,draw=black] (0,0) rectangle (3,2.5) node[pos=.5,style={align=center}]{Interbank (Asset 2) \\ $q_2 \sum_{j = 1}^n L_{ji}^2$};

\filldraw[fill=red!20!white,draw=black] (3,4) rectangle (6,8) node[pos=.5,style={align=center}]{Total (Asset 1) \\ $q_1 \bar p_i^1$};
\filldraw[fill=orange!20!white,draw=black] (3,2.5) rectangle (6,4) node[pos=.5,style={align=center}]{Total (Asset 2) \\ $q_2 \bar p_i^2$};
\filldraw[fill=yellow!20!white,draw=black] (3,0) rectangle (6,2.5) node[pos=.5,style={align=center}]{Book Capital};
\end{tikzpicture}
\caption{Marked-to-market book for firm $i$, i.e.\ assuming all liabilities are paid in full.}
\label{Fig:bs-book}
\end{subfigure}
~
\begin{subfigure}[b]{.47\textwidth}
\begin{tikzpicture}[xscale=1.15]
\draw[draw=none] (0,8.5) rectangle (6,9) node[pos=.5]{\bf Marked-to-Market Realized Balance Sheet};
\draw[draw=none] (0,8) rectangle (3,8.5) node[pos=.5]{\bf Assets};
\draw[draw=none] (3,8) rectangle (6,8.5) node[pos=.5]{\bf Liabilities};

\filldraw[fill=blue!20!white,draw=black] (0,6) rectangle (3,8) node[pos=.5,style={align=center}]{Endowment (Asset 1) \\ $q_1 x_i^1$};
\filldraw[fill=blue!10!white,draw=black] (0,5) rectangle (3,6) node[pos=.5,style={align=center}]{Interbank (Asset 1) \\ $q_1 \sum_{j = 1}^n a_{ji}^1 [\bar p_j^1 \wedge y_j^1]$};
\filldraw[fill=violet!20!white,draw=black] (0,3.75) rectangle (3,5) node[pos=.5,style={align=center}]{Endowment (Asset 2) \\ $q_2 x_i^2$};
\filldraw[fill=violet!10!white,draw=black] (0,2) rectangle (3,3.75) node[pos=.5,style={align=center}]{Interbank (Asset 2) \\ $q_2 \sum_{j = 1}^n a_{ji}^2 [\bar p_j^2 \wedge y_j^2]$};
\filldraw[fill=blue!10!white,draw=none] (0,0.75) rectangle (3,2);
\filldraw[fill=violet!10!white,draw=none] (0,0) rectangle (3,0.75);
\draw (0,2) -- (3,2);
\draw[dotted] (0,0.75) -- (3,0.75);

\filldraw[fill=red!20!white,draw=black] (3,4) rectangle (6,8) node[pos=.5,style={align=center}]{Total (Asset 1) \\ $q_1 \bar p_i^1$};
\filldraw[fill=orange!20!white,draw=black] (3,2.5) rectangle (6,4) node[pos=.5,style={align=center}]{Total (Asset 2) \\ $q_2 \bar p_i^2$};
\filldraw[fill=yellow!20!white,draw=black] (3,2) rectangle (6,2.5) node[pos=.5,style={align=center}]{Realized Capital};
\filldraw[fill=yellow!20!white,draw=black] (3,0) rectangle (6,2);
\draw (0,0) rectangle (6,8);
\draw (3,0) -- (3,8);

\begin{scope}
    \clip (0,0) rectangle (6,2);
    \foreach \x in {-6,-5.5,...,6}
    {
        \draw[line width=.5mm] (\x,0) -- (6+\x,8);
    }
\end{scope}
\end{tikzpicture}
\caption{Marked-to-market balance sheet for firm $i$ before transferring assets to cover liabilities.}
\label{Fig:bs-balance}
\end{subfigure}

\vspace{1cm}

\begin{subfigure}[b]{.47\textwidth}
\hspace{-2cm}
\begin{tikzpicture}[xscale=1.15]
\draw[draw=none] (0,8.5) rectangle (6,9) node[pos=.5]{\bf Marked-to-Market Realized Holdings and Balance Sheet};
\draw[draw=none] (0,8) rectangle (3,8.5) node[pos=.5]{\bf Assets};
\draw[draw=none] (3,8) rectangle (6,8.5) node[pos=.5]{\bf Liabilities};

\filldraw[fill=blue!15!white,draw=black] (0,4) rectangle (3,8) node[pos=.5,style={align=center}]{Holdings (Asset 1) \\ $q_1 y_i^1$};
\draw[dashed] (0,5) -- (3,5);
\filldraw[fill=violet!15!white,draw=black] (0,2) rectangle (3,4) node[pos=.5,style={align=center}]{Holdings (Asset 2) \\ $q_2 y_i^2$};
\draw[decorate,decoration={brace,amplitude=7pt},xshift=-4pt,yshift=0pt] (0,4) -- (0,5) node [black,midway,xshift=-1.2cm,style={align=center}]{Transferred \\ Assets \\ $2 \; \Rightarrow \; 1$};
\filldraw[fill=blue!10!white,draw=none] (0,0.75) rectangle (3,2);
\filldraw[fill=violet!10!white,draw=none] (0,0) rectangle (3,0.75);
\draw (0,2) -- (3,2);
\draw[dotted] (0,0.75) -- (3,0.75);

\filldraw[fill=red!20!white,draw=black] (3,4) rectangle (6,8) node[pos=.5,style={align=center}]{Total (Asset 1) \\ $q_1 \bar p_i^1$};
\filldraw[fill=orange!20!white,draw=black] (3,2.5) rectangle (6,4) node[pos=.5,style={align=center}]{Total (Asset 2) \\ $q_2 \bar p_i^2$};
\filldraw[fill=yellow!20!white,draw=black] (3,2) rectangle (6,2.5) node[pos=.5,style={align=center}]{Realized Capital};
\filldraw[fill=yellow!20!white,draw=black] (3,0) rectangle (6,2);
\draw (0,0) rectangle (6,8);
\draw (3,0) -- (3,8);

\begin{scope}
    \clip (0,0) rectangle (6,2);
    \foreach \x in {-6,-5.5,...,6}
    {
        \draw[line width=.5mm] (\x,0) -- (6+\x,8);
    }
\end{scope}
\end{tikzpicture}
\caption{Marked-to-market balance sheet for firm $i$ after transferring assets to cover liabilities.}
\label{Fig:bs-holdings}
\end{subfigure}
\caption{Stylized balance sheet for firm $i$ under price vector $q$.}
\label{Fig:balance_sheet}
\end{figure} 

Consider a financial system with $n$ financial institutions (e.g., banks, hedge funds, or pension plans) and a financial market with $m$ illiquid assets (possibly currencies).  We denote by $y \in \bbr^{n \times m}_+$ the realized portfolio holdings of the institutions and by $q \in \bbr^m_{++}$ the prices of the assets in some num\'{e}raire.  We assume throughout that the price of each asset is bounded away from zero.  Throughout this paper we will use the notation 
\begin{align*}
a \wedge b &:= \left(\begin{array}{cccc} \min(a_{11},b_{11}) & \min(a_{12},b_{12}) & \dots & \min(a_{1 d_2},b_{1 d_2}) \\
                                         \min(a_{21},b_{21}) & \min(a_{22},b_{22}) & \dots & \min(a_{2 d_2},b_{2 d_2}) \\
                                            \vdots           &    \vdots           & \ddots &   \vdots                   \\ 
                                         \min(a_{d_1 1},b_{d_1 1}) & \min(a_{d_1 2},b_{d_1 2}) & \dots & \min(a_{d_1d_2},b_{d_1d_2}) \end{array}\right),\\
a \vee b &:= \left(\begin{array}{cccc} \max(a_{11},b_{11}) & \max(a_{12},b_{12}) & \dots & \max(a_{1 d_2},b_{1 d_2}) \\
                                       \max(a_{21},b_{21}) & \max(a_{22},b_{22}) & \dots & \max(a_{2 d_2},b_{2 d_2}) \\
                                          \vdots           &    \vdots           & \ddots &   \vdots                   \\ 
                                       \max(a_{d_1 1},b_{d_1 1}) & \max(a_{d_1 2},b_{d_1 2}) & \dots & \max(a_{d_1d_2},b_{d_1d_2}) \end{array}\right)
\end{align*} 
where $a,b \in \bbr^{d_1 \times d_2}$ for some $d_1,d_2 \in \bbn$.  Additionally let $a^+ :=  a \vee 0$ and $a^- := (-a) \vee 0$ where $a \in \bbr^d$ for some $d \in \bbn$.

As described in \cite{EN01}, any financial agent $i \in \{1,2,...,n\}$ may be a creditor or obligor to other agents.  However, in contrast to \cite{EN01}, we consider these liabilities in multiple currencies that must be fulfilled in the physical assets rather than some num\'{e}raire.  This distinction, while unimportant in frictionless markets, will be necessary in Section~\ref{Sec:mAsset} when considering the impact on prices caused by the transactions undertaken by the firms.  Let $L_{ij}^k \geq 0$ be the contractual obligation of firm $i$ towards firm $j$ in asset $k$.  Further, we assume that no firm has an obligation to itself in any asset, i.e., $L_{ii}^k = 0$.  The \emph{total liabilities} of agent $i$ in asset $k$ are given by
\[\bar p_i^k := \sum_{j = 1}^n L_{ij}^k.\]
We can define the vector $\bar p^k \in \bbr^n_+$ as the vector of total obligations of each firm in asset $k$.
On the asset side of the balance sheet, each firm $i = 1,2,...,n$ has an initial endowment of $x_i^k \geq 0$ in each $k = 1,2,...,m$ asset.  We refer to Figure~\ref{Fig:bs-book} for a visual representation of the stylized book value of assets and liabilities for a representative firm with $m = 2$ assets and with market prices $q \in \bbr^m_{++}$.  Though firms may alter their borrowing based on market prices due to, e.g., new monetary policy in response to altered exchange rates, modifying the balance sheet in such a way is outside the scope of the current work.  We refer to \cite{SSB16b,SSB17,BF18} for consideration of contingent liabilities in the single ($m = 1$) asset setting.

The \emph{relative liabilities} of firm $i$ to firm $j$ in asset $k$, i.e., the fractional amount of total liabilities of firm $i$ towards firm $j$ in asset $k$, are given by 
\[a_{ij}^k = \begin{cases}\frac{L_{ij}^k}{\bar p_i^k} & \text{if } \bar p_i^k > 0 \\ \frac{1}{n} & \text{if } \bar p_i^k = 0\end{cases}.\]
We define the matrices $A^k = (a_{ij}^k)_{i,j = 1,2,...,n}$ with the property (by construction) $\sum_{j = 1}^n a_{ij}^k = 1$ for any $i$ and $k$.  In the case that $\bar p_i^k = 0$ we are able to choose $a_{ij}^k$ arbitrarily, we let $a_{ij}^k = \frac{1}{n}$ in that case so that the summation is equal to $1$.  Any financial firm may default on their obligations in asset $k$ if they do not hold a sufficient number of that asset.  We assume, as per \cite{EN01}, that in case of default the realized payments will be made in proportion to the size of the obligations, i.e., based on the relative liabilities matrix $A^k$ and without prioritization of payments to any firm.
That is, the realized value (in physical units) of firm $i$'s interbank assets in asset $k$ is given by
\[\sum_{j = 1}^n a_{ji}^k [\bar p_j^k \wedge y_j^k]\]
when firm $j \neq i$ holds $y_j^k$ units of asset $k$.  Encoded in this equation is the notion that if firm $j$ has more assets than liabilities in asset $k$ then it will pay out in full ($a_{ji}^k \bar p_j^k = L_{ji}^k$), otherwise it will pay out its holdings proportionally to what it owes.  This realized balance sheet is depicted in Figure~\ref{Fig:bs-balance} with $m = 2$ assets and with market prices $q \in \bbr^m_{++}$.

As we consider the setting in which all liabilities must be paid in physical assets, we need to consider an additional step to find the realized holdings for each bank in the system.  For instance, Figure~\ref{Fig:bs-balance} depicts the firm with positive mark-to-market capital, but a deficit in the first asset.  Thus, as depicted in Figure~\ref{Fig:bs-holdings}, they would have to transfer some units of the second asset so as to cover this liability.  As Figure~\ref{Fig:bs-holdings} considers the frictionless market, the realized capital for the firm before and after the transaction will remain constant, and as such this system is functionally equivalent to (a generalization of) the payment model from \cite{EN01}.  However, if price impacts were introduced (see Section~\ref{Sec:mAsset}) then more complicated firm behavior needs to be considered and a reduction to mark-to-market values is insufficient to describe the entire system.  The details of the firm behavior through a utility maximization problem is provided in the next section.

\section{The model}\label{Sec:model}
In this section we will first introduce the clearing framework for multi-currency obligations without price impacts.  In this setting we provide results on existence and uniqueness, which generalizes those results from~\cite{EN01,E07}.  In this case we are able to consider a fictitious default algorithm as was first considered in~\cite{EN01}.  The framework without price impacts is of interest because it is mathematically tractable.  Additionally, from a financial perspective it is of interest due to the generality of the payment schemes provided herein as well as allowing for clear heterogeneous shocks to the various institutions.  Under such a setting the use of multi-layered networks is unnecessary as an approach with appropriate prioritization of payments as in~\cite{E07} can be taken instead on the marked-to-market wealth.  However, with these results, we introduce price impacts due to the transfer of assets undertaken by the firms.  These market impacts cannot be wholly described with only the marked-to-market wealths and thus require the use of vector-valued, i.e.\ multi-layered, networks.  We conclude this section by considering the resultant equilibrium exchange rates achieved after an initial shock to the asset values.  This allows us to classify when the system of banks exacerbates, and when it mitigates, the effects of a financial crisis.  

We wish to compare this model with prior notions of fire sales in the~\cite{EN01} framework, e.g.~\cite{AW_15}.  In such works, all obligations are denominated in the same (cash) asset and illiquid assets are sold at a discount in order to cover these cash shortfalls.  By taking such an approach, the monotonicity of the clearing mechanism is immediate and Tarski's fixed point provides existence of clearing payments and prices.  However, in this work, banks are given freedom to both buy and sell assets so as to cover their obligations (in multiple assets) and to, for instance, purchase assets at a discount so as to increase their utility.  The existence of clearing prices and portfolio holdings requires more thorough comparative static results (that are provided in the appendix), and ultimately does not result in a lattice of equilibrium solutions.

\subsection{Financial contagion without market impacts}\label{Sec:monotonic}

Fix the behavior of all firms but firm $i$, i.e., the amount of each asset that all firms but $i$ hold is $y_{-i} \in \bbr^{(n-1)\times m}_+$ (with firm $j$ holding $y_j \in \bbr^m_+$), and the relative prices is given by the vector $q \in \bbr^m_{++}$.  The amount of each asset that firm $i$ has immediately available due to the payments from the other firms is given by
\[\left(x_i^k + \sum_{j = 1}^n a_{ji}^k \left[\bar p_j^k \wedge y_j^{k}\right]\right)_{k = 1,2,...,m}.\]
As described in \cite{EN01}, and depicted in Figure~\ref{Fig:bs-balance}, firms have available the sum of the endowment $x_i^k$ and the realized interbank assets $\sum_{j = 1}^n a_{ji}^k [\bar p_j^k \wedge y_j^{k}]$.
Following the concept of limited liabilities (i.e., no firm pays more than it owes) and absolute priority (i.e., no firm accumulates positive equity until all debts are paid in full), the holdings of firm $i$ are such that 
\[\left(\exists k^* \in \{1,2,...,m\}: \; y_i^{k^*} > \bar p_i^{k^*}\right) \quad \Rightarrow \quad y_i \geq \bar p_i.\]
We assume that additional regulatory rules apply to the multi-asset payments.  That is, regulators may enforce, e.g., a prioritized payment (as in, e.g., \cite{E07}) or pro-rata payment (as in, e.g., \cite{EN01}) between different assets or currencies.  These rules are encoded in some monotonic, strictly concave, and supermodular \emph{payment utility function} $h_i$.  The payments made by firm $i$ are given by
\begin{equation} 
\label{Eq:payment-h} P_i(y,q) = \argmax_{p_i \in [0,\bar p_i]} \left\{h_i(p_i;y_{-i},q) \; \left| \; \sum_{k = 1}^m q_k p_i^k \leq \sum_{k = 1}^m q_k \left(x_i^k + \sum_{j = 1}^n a_{ji}^k \left[\bar p_j^k \wedge y_j^{k}\right]\right)\right.\right\}.
\end{equation}
The payment function $P_i$ is defined, given the portfolio holdings of all other firms and the prevailing market price, so that the mark-to-market value of the payments does not exceed the available marked-to-market realized assets.  Additionally, by constraining the payments between $0$ and $\bar p_i$, we enforce the limited liabilities assumption.  The inclusion of the payment utility function $h_i$ is to guarantee that the resultant payments will satisfy the desired regulatory environment (e.g., prioritization or proportionality).  We refer to Section~\ref{Sec:h} for constructions of the payment utility function under financially interesting regulations.  This payment scheme is general enough to cover regulatory environments beyond the standard frameworks in the literature, i.e.\ proportional and prioritized payments, to include, e.g., a surplus repayment scheme described in Example~\ref{Ex:surplus}.

However, firm $i$ may choose to trade more assets than required to make its payments; this additional trading will be done in order to optimize some \emph{utility function} $u_i$.  To guarantee absolute priority, the final number of assets that firm $i$ holds must exceed, in each asset, the payment $P_i(y,q)$.  In this way the utility function is redundant, and unnecessary, for firms that are insolvent as they must cover exactly their payments $P_i(y,q)$.  Further, we constrain the actions of each bank so that it can obtain its desired portfolio without loss of mark-to-market valuation from its marked-to-market assets.  Thus the vector of asset holdings for firm $i$ is given by the bilevel program
\begin{equation} \label{Eq:holding}
y_i \in Y_i(y,q) = \argmax_{e_i \in \bbr^m_+} \left\{u_i(e_i;y_{-i},q) \; \left| \; \begin{array}{rcl}e_i &\geq& P_i(y,q),\\ \sum_{k = 1}^m q_k e_i^k &\leq& \sum_{k = 1}^m q_k \left(x_i^k + \sum_{j = 1}^n a_{ji}^k \left[\bar p_j^k \wedge y_j^{*k}\right]\right) \end{array}\right.\right\}.
\end{equation}
By enforcing the non-negativity constraint, we encode a no-short selling assumption.  Note that we allow firms to throw away wealth in determining their final portfolio holdings.  While mathematically this is possible, all examples considered herein will guarantee that any value in $Y_i(y,q)$ will have terminal (mark-to-market) wealth equal to the value of the firm's assets.

With the given rules for repayment and firm behaviors, we are able to fully describe the clearing mechanism for asset holdings.  Given an asset holding matrix $y \in \bbr^{n \times m}_+$ and pricing vector $q \in \bbr^m_{++}$ the updated asset holdings are given by the \emph{clearing mechanism} $Y$  
where $(Y_i)_{i = 1,2,...,n}$ is given in~\eqref{Eq:holding}.  Implicitly within this clearing mechanism, the regulatory agency has a role to play by specifying the payment utility function which determines the payment function $P_i$ (defined in~\eqref{Eq:payment-h}) for each firm $i$.  

We use the clearing mechanism to compute the \emph{realized holdings} $y(q) \in \bbr^{n \times m}_+$ under the pricing vector $q \in \bbr^m_{++}$.  This is provided by a fixed point of the clearing mechanism, i.e.,
\[y(q) \in Y(y(q),q).\]
We now consider conditions for the existence of maximal and minimal clearing solutions $y(q)$, which is the general property satisfied in the Eisenberg-Noe model, under a crisis price of $q$.  These results are then used to prove a sufficient condition for the uniqueness of the clearing solution by guaranteeing that the maximal and minimal solutions must coincide.  Note that, due to the generality of the payment scheme, encoded by the payment utility functions, it is not possible to directly apply the results of \cite{EN01} on the marked-to-market assets and liabilities for each firm; however, in the special case discussed in Remark~\ref{Rem:proportional-unique} below this approach could be taken.  Additionally, due to the utility maximizing behavior of the regulators (through the payment utility function) and solvent banks (through the utility functions), we must consider comparative statics of the bilevel optimization problems for each firm to prove the result as provided in the appendix.

\begin{lemma}\label{Lemma:monotonic}
Fix a price $q \in \bbr^{m}_{++}$.
Let the \emph{payment utility functions} $h_i: [0,\bar p_i] \times \bbr^{(n-1) \times m}_+ \times \bbr^m_{++} \to \bbr$ be strictly increasing, strictly concave, and supermodular in its first argument.
Let the \emph{utility functions} $u_i: \bbr^m_+ \times \bbr^{(n-1) \times m}_+ \times \bbr^m_{++} \to \bbr$ be concave and supermodular in its first argument.  Additionally assume that $Y_i(y,q)$ (defined in~\eqref{Eq:holding}) is singleton-valued for any $y \in \bbr^{n \times m}$ and for all agents $i$. 
\begin{enumerate}
\item\label{Lemma:monotonic-exist} There exists a greatest and least clearing holdings $y^{\uparrow}(q) \geq y^{\downarrow}(q)$ satisfying $y = Y(y,q)$.
\item\label{Lemma:monotonic-equity} The positive equity of all firms is equal for every fixed point, i.e., $(y_i^{\uparrow}(q) - \bar p_i)^+ = (y_i^{\downarrow}(q) - \bar p_i)^+$ for every firm $i$.
\end{enumerate}
\end{lemma}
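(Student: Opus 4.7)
The plan is to prove both parts by lattice-theoretic and conservation arguments: part (i) via Tarski's fixed point theorem, and part (ii) via a wealth-balance identity.

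For part (i), I first identify a complete lattice on which $Y(\cdot,q)$ acts. For any $y$ and any $e_i \in Y_i(y,q)$, the budget constraint in \eqref{Eq:holding} gives
\[
\sum_k q_k e_i^k \;\le\; \sum_k q_k \Bigl(x_i^k + \sum_j a_{ji}^k \bar p_j^k\Bigr) =: \bar W_i(q),
\]
so that $e_i^k \le \bar W_i(q)/q_k =: \hat y_i^k$. Hence $Y(\cdot,q)$ maps the complete lattice $[0,\hat y] \subset \bbr^{n \times m}$, partially ordered componentwise, into itself; feasibility of $\{e_i = P_i(y,q)\}$ together with the singleton-valued hypothesis makes $Y(\cdot,q)$ a well-defined function there.

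Next I establish that $Y(\cdot,q)$ is order-preserving on $[0,\hat y]$. If $y \le y'$ componentwise, then the received assets $c_i^k(y) := x_i^k + \sum_j a_{ji}^k [\bar p_j^k \wedge y_j^k]$ are componentwise non-decreasing, and so is the scalar budget $B_i(y) := \sum_k q_k c_i^k(y)$. Strict concavity and supermodularity of $h_i$ in its first argument, combined with the expanding budget, force the unique payment argmax $P_i(\cdot,q)$ to be componentwise non-decreasing: a KKT analysis shows that $-\nabla^2 h_i$ has nonpositive off-diagonals (by supermodularity) and is negative definite (by strict concavity), hence is an M-matrix whose inverse is entrywise non-negative, so the optimizer moves componentwise upward when the Lagrange multiplier on the budget decreases. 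A case split then propagates monotonicity to $Y_i$: if $\sum_k q_k \bar p_i^k > B_i(y)$ (insolvency) the lower-bound and budget constraints together pin $Y_i(y,q)=P_i(y,q)$; if $\sum_k q_k \bar p_i^k \le B_i(y)$ (solvency) one has $Y_i(y,q) = \bar p_i + \Delta e_i$ with $\Delta e_i$ maximizing the concave supermodular $u_i$ over the residual budget, to which the same M-matrix comparative-statics argument applies. Tarski's fixed point theorem then gives that the set of fixed points of $Y(\cdot,q)$ is a non-empty complete lattice, whose maximum $y^{\uparrow}(q)$ and minimum $y^{\downarrow}(q)$ establish part (i).

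For part (ii), I exploit a wealth-balance identity. At any fixed point $y$, strict monotonicity of $h_i$ makes the budget bind for insolvent firms; under the paper's convention (stated just after \eqref{Eq:holding}) that no mark-to-market wealth is discarded, it also binds for solvent firms. Summing $\sum_k q_k y_i^k = \sum_k q_k c_i^k(y)$ across $i$ and using $\sum_i a_{ji}^k = 1$ to collapse the interbank cross-terms yields
\[
\sum_i \sum_k q_k \bigl(y_i^k - \bar p_i^k\bigr)^+ = \sum_i \sum_k q_k x_i^k,
\]
since $y_i^k - (y_i^k \wedge \bar p_i^k) = (y_i^k - \bar p_i^k)^+$. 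The right-hand side depends only on the data, not on the particular fixed point, so the identity evaluated at $y^{\uparrow}$ and $y^{\downarrow}$ must agree. Because $y^{\uparrow} \ge y^{\downarrow}$ forces $(y_i^{k,\uparrow} - \bar p_i^k)^+ \ge (y_i^{k,\downarrow} - \bar p_i^k)^+$ componentwise and the weights $q_k$ are strictly positive, equality of the weighted sums forces componentwise equality $(y_i^{\uparrow} - \bar p_i)^+ = (y_i^{\downarrow} - \bar p_i)^+$, which is (ii).

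The principal obstacle I anticipate is the componentwise monotonicity of the argmax maps $P_i$ and $Y_i$: the budget-constrained box is not a sublattice, so Topkis' theorem does not apply directly, and the argument must instead exploit the M-matrix structure of $-\nabla^2 h_i$ (and $-\nabla^2 u_i$ on the residual problem), with subgradient/KKT extensions handling optimizers on the boundary of $[0,\bar p_i]$ or $\bbr^m_+$.
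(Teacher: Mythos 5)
Your architecture is exactly the paper's: show that $P_i(\cdot,q)$ and $Y_i(\cdot,q)$ are componentwise nondecreasing on a complete lattice of feasible holdings, apply Tarski for part (i), and obtain part (ii) from the conservation identity $\sum_i q^\T(y_i-\bar p_i)^+=\sum_i q^\T x_i$ by telescoping the binding budget constraints with $\sum_j a_{ij}^k=1$ and then using $y^{\uparrow}\geq y^{\downarrow}$ together with $q\in\bbr^m_{++}$ to upgrade equality of weighted sums to componentwise equality. Your case split for $Y_i$ (the budget and lower-bound constraints pin $Y_i=P_i$ for a defaulting firm; a residual utility-maximization problem for a solvent one) likewise mirrors the paper's.

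The one place you diverge is the justification of the monotone comparative statics, and that is where your argument has a genuine gap. You correctly diagnose that the budget set is not a sublattice, so Topkis cannot be applied directly; but your KKT/M-matrix substitute requires $h_i$ and $u_i$ to be twice differentiable with nondegenerate Hessians, none of which is assumed in the lemma. Strict concavity does not yield a definite Hessian (and note the sign slip: for the M-matrix claim you need $-\nabla^2 h_i$ \emph{positive} definite with nonpositive off-diagonals, not negative definite), and $u_i$ is only assumed concave, so $-\nabla^2 u_i$ may be singular and the inverse in your comparative-statics formula need not exist --- precisely in the solvent-firm case where you need it. The paper closes this step by invoking Corollary 2(ii) of Quah (2007), a nonsmooth monotone-comparative-statics theorem tailored to budget-type constraint sets: it needs only concavity and supermodularity of the objective, uniqueness of the maximizer (which strict concavity of $h_i$, respectively the singleton hypothesis on $Y_i$, supplies), and the fact that the budget bound $\sum_k q_k(x_i^k+\sum_j a_{ji}^k[\bar p_j^k\wedge y_j^k])$ is nondecreasing in $y$ --- all of which you have already set up. Replacing your M-matrix heuristic with that citation (or any nonsmooth proof of the same comparative-statics fact) makes your proof complete; as written, it does not cover the stated generality of the hypotheses.
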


These results on a greatest and least clearing portfolio holdings generalize Theorem 1 of \cite{EN01}.  In fact in the $m = 1$ asset case, all payment utility functions and utility functions satisfying the conditions of Lemma~\ref{Lemma:monotonic} (for example setting $h_i(p_i;y_{-i},q) := p_i^2$ and $u_i(e_i;y_{-i},q) := e_i$) recover exactly the Eisenberg-Noe payments and assets as given by Theorem 1 of \cite{EN01}.

In the following corollary, we introduce an additional node to the financial system.  Denoted as node $0$, this ``firm'' represents all institutions and persons not included in the system of $n$ banks.  This notion is developed in more detail in, e.g.,~\cite{GY14,feinstein2014measures}.  In particular, we will assume that this ``societal node'' acts as a sink to the system, i.e., it has no obligations into the network.  This is incorporated in the assumption that the societal node will never default on its obligations as the initial endowments come from outside the original system.  If a default of node $0$ were desired, this could be included by stressing the initial endowments of the $n$ firms. 
\begin{corollary}\label{Cor:monotonic-unique}
Consider the setting of Lemma~\ref{Lemma:monotonic}.  If $L_{i0}^k > 0$ and $L_{0i}^k = 0$ for every firm $i$ and asset $k$ then the equilibrium holdings under price $q \in \bbr^m_{++}$ is unique, i.e., $y^*(q) := y^{\uparrow}(q) = y^{\downarrow}(q)$.  
\end{corollary}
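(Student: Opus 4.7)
The plan is to argue by contradiction, adapting the classical Eisenberg--Noe ``leakage'' argument to the multi-asset setting using Lemma~\ref{Lemma:monotonic}(ii). Suppose $y^{\uparrow}(q) \neq y^{\downarrow}(q)$, and let
\[D := \{i \in \{1,\dots,n\} : y_i^{\uparrow}(q) \neq y_i^{\downarrow}(q)\}.\]
By hypothesis $D \neq \emptyset$, and from Lemma~\ref{Lemma:monotonic}(i) we know $y_i^{\uparrow k}(q) \geq y_i^{\downarrow k}(q)$ pointwise.

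First I would show that every firm $i \in D$ is \emph{strictly defaulting} in both clearing solutions, in the sense that $y_i^{\uparrow}(q) \leq \bar p_i$ (and hence $y_i^{\downarrow}(q) \leq \bar p_i$) in every asset. Suppose instead that $y_i^{\uparrow k_0}(q) > \bar p_i^{k_0}$ for some coordinate $k_0$; the absolute priority rule forces $y_i^{\uparrow}(q) \geq \bar p_i$ in every coordinate, and Lemma~\ref{Lemma:monotonic}(ii) then forces $y_i^{\downarrow k_0}(q) = y_i^{\uparrow k_0}(q) > \bar p_i^{k_0}$, so absolute priority also yields $y_i^{\downarrow}(q) \geq \bar p_i$. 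On the coordinate $k$ witnessing $i \in D$ we would then have $y_i^{\uparrow k}, y_i^{\downarrow k} \geq \bar p_i^k$, and Lemma~\ref{Lemma:monotonic}(ii) would force equality, contradicting $i \in D$. Consequently, for $i \in D$ and all $k$ we may replace $\bar p_i^k \wedge y_i^{\uparrow k}$ by $y_i^{\uparrow k}$ (and likewise for $y^{\downarrow}$).

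Next I would establish that for every defaulting firm the mark-to-market budget constraint is tight. Since $h_i$ is strictly increasing, the argmax in~\eqref{Eq:payment-h} cannot satisfy both $P_i^k(y,q) < \bar p_i^k$ for some $k$ and $\sum_k q_k P_i^k(y,q) < \sum_k q_k(x_i^k + \sum_j a_{ji}^k[\bar p_j^k \wedge y_j^k])$; otherwise $p_i^k$ could be increased. Because $i$ is defaulting the former holds, so the budget is tight. The constraint $e_i \geq P_i(y,q)$ combined with the budget inequality in~\eqref{Eq:holding} then forces $\sum_k q_k y_i^k = \sum_k q_k(x_i^k + \sum_j a_{ji}^k[\bar p_j^k \wedge y_j^k])$ at both the greatest and least fixed points.

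The main step is now a summation argument. Writing the above identity at both $y^{\uparrow}(q)$ and $y^{\downarrow}(q)$, subtracting, and using that contributions from $j \notin D$ vanish (either because $y_j^{\uparrow} = y_j^{\downarrow}$ or because a positive-equity firm in both solutions pays $\bar p_j$ in both, using Lemma~\ref{Lemma:monotonic}(ii)), I obtain for every $i \in D$
\[\sum_k q_k\bigl(y_i^{\uparrow k}(q) - y_i^{\downarrow k}(q)\bigr) = \sum_{j \in D}\sum_k q_k\, a_{ji}^k\bigl(y_j^{\uparrow k}(q) - y_j^{\downarrow k}(q)\bigr).\]
Summing over $i \in D$ and swapping the order of summation yields
\[\sum_{j \in D}\sum_k q_k\bigl(y_j^{\uparrow k}(q) - y_j^{\downarrow k}(q)\bigr)\Bigl(1 - \sum_{i \in D}a_{ji}^k\Bigr) = 0.\]
Because $L_{j0}^k > 0$ for every $j$ and $k$, we have $a_{j0}^k > 0$, so $1 - \sum_{i \in D}a_{ji}^k \geq a_{j0}^k > 0$. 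Since $q \in \bbr^m_{++}$ and every $j \in D$ has some coordinate with $y_j^{\uparrow k}(q) > y_j^{\downarrow k}(q)$, the left-hand side is strictly positive, a contradiction. Hence $D = \emptyset$ and $y^{\uparrow}(q) = y^{\downarrow}(q)$.

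The delicate point I expect to work hardest on is step two, ensuring the budget constraint is indeed tight under the generality of the payment utility $h_i$ and utility $u_i$ (which is only concave, not strictly increasing); the argument above routes the tightness through the monotonicity of $h_i$ together with the constraint $e_i \geq P_i(y,q)$, rather than through $u_i$ directly, which is what makes the leakage argument work at the level of mark-to-market wealth rather than asset by asset.
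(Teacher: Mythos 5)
Your argument is correct in substance, but it takes a genuinely different route from the paper's. The paper's proof is an anchoring argument through the societal node: since node $0$ has no obligations it trivially has positive equity, so Lemma~\ref{Lemma:monotonic}(ii) forces $y_0^{\uparrow}(q) = y_0^{\downarrow}(q)$; if any firm $i$ outside the positive-equity set $E^+$ had $y_i^{\uparrow k}(q) > y_i^{\downarrow k}(q)$, then because $a_{i0}^k > 0$ the mark-to-market value received by node $0$ would be strictly larger at the greatest fixed point, contradicting $y_0^{\uparrow}(q) = y_0^{\downarrow}(q)$. Your proof instead runs the classical Eisenberg--Noe conservation/leakage computation over the set $D$ of firms whose holdings differ, showing that the aggregate surplus $\sum_{j \in D}\sum_k q_k\,(y_j^{\uparrow k}-y_j^{\downarrow k})\,a_{j0}^k$ must leak out to node $0$ and hence must vanish. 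Both proofs use the hypothesis $L_{i0}^k>0$ in exactly the same way, and your first step (firms in $D$ satisfy $y_i \leq \bar p_i$ in both solutions, via absolute priority and Lemma~\ref{Lemma:monotonic}(ii)) mirrors a step the paper also needs implicitly. The paper's route is shorter because it only tracks what node $0$ \emph{receives}, so it never needs to argue that any budget constraint binds; your route needs the budget identity for each $i \in D$, which is where the extra work (and the one delicate point) lives.

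On that delicate point: your tightness claim is slightly too strong as stated. For $i \in D$ you know $y_i^{\downarrow}(q) \lneq \bar p_i$, so at the least fixed point some coordinate of $P_i$ is strictly below $\bar p_i$ and your monotonicity-of-$h_i$ argument does give a tight budget, whence $y_i^{\downarrow}(q) = P_i(y^{\downarrow}(q),q)$. But at the greatest fixed point the case $y_i^{\uparrow}(q) = \bar p_i$ (full payment with zero equity) is not excluded for $i \in D$; there $P_i(y^{\uparrow}(q),q) = \bar p_i$, the ``otherwise increase $p_i^k$'' argument has no room to operate, and the budget may have slack (the general $u_i$ is only concave, so the maximizer need not exhaust wealth). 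The fix is one line: at the greatest fixed point use only the inequality $\sum_k q_k y_i^{\uparrow k}(q) \leq \sum_k q_k\bigl(x_i^k + \sum_j a_{ji}^k[\bar p_j^k \wedge y_j^{\uparrow k}(q)]\bigr)$ from~\eqref{Eq:holding}. Subtracting the tight identity at the least fixed point turns your displayed equalities into ``$\leq$'', and the final contradiction (a strictly positive quantity bounded above by $0$) survives unchanged.
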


\begin{remark}\label{Rem:proportional-unique}
Under specific choices of payment utility and utility functions $h_i$ and $u_i$ satisfying the conditions of Lemma~\ref{Lemma:monotonic}, we can give weaker conditions for uniqueness.  For instance, under proportional transfers (see Example~\ref{Ex:priority-proportional} with $\mu = 0$) with minimal trading (see Example~\ref{Ex:min-trading}) if $(q^{*\T}x,\sum_{k = 1}^m q_k^* L^k)$ is a regular network in the setting of \cite{EN01} (i.e., all firms have a directed path, possibly of length 0, to a firm with positive endowment, see Definition 5 of \cite{EN01}) then the clearing holdings are unique.
\end{remark}

We will introduce a modified version of the \emph{fictitious default algorithm} from \cite{EN01,RV13,AW_15,AFM16,feinstein2015illiquid} for the construction of the greatest portfolio holdings $y^{\uparrow}(q)$ under price $q \in \bbr^m_{++}$.  In particular, as with the prior fictitious default algorithms, this algorithm will converge after at most $n$ iterations since the set of defaulting banks is monotonic.  Though this algorithm converges within the finite number of iterations, it includes a fixed point problem in each iteration as is also the case in, e.g., \cite{AW_15,AFM16}.
\begin{alg}\label{Alg:clearing}
Consider the setting of Lemma~\ref{Lemma:monotonic} such that, additionally, 
\begin{align*}
h_i(p_i;y_{-i},q) &= h_i(p_i;\bar p_{-i} \wedge y_{-i},q)\\
u_i(e_i;y_{-i},q) &= u_i(e_i;\bar p_{-i} \wedge y_{-i},q)
\end{align*}
for every firm $i$ and every $p_i \in [0,\bar p_i]$, $e_i \in \bbr^m_+$, $y_{-i} \in \bbr^{(n-1)\times m}_+$, and $q \in [\underline q,\overline q]$. The greatest portfolio holdings $y^{\uparrow}(q)$ under price $q \in \bbr^m_{++}$ can be found by the following algorithm in at most $n$ iterations of the following.  
Initialize $\alpha = 0$, $p^\alpha = \bar p$, and $D^\alpha = \emptyset$.  Repeat until convergence:
\begin{enumerate}
\item Increment $\alpha = \alpha+1$;
\item For any firm $i = 1,2,...,n$ and asset $k = 1,2,...,m$, define the portfolio holdings by
$y_{ik}^\alpha = x_{ik} + \sum_{j = 1}^n a_{ji}^k p_{jk}^{\alpha-1};$
\item Denote the set of insolvent banks by
$D^\alpha := \left\{i \in \{1,2,...,n\} \; | \; q^{\T}(y_i^\alpha - \bar p) < 0\right\}$;
\item\label{Alg:terminate} If $D^\alpha = D^{\alpha-1}$ then exit loop;
\item Define the matrix $\Lambda^\alpha \in \{0,1\}^{n \times n}$ so that
$\Lambda_{ij}^\alpha = \begin{cases}1 &\text{if } i = j \in D^\alpha \\ 0 &\text{else}\end{cases}$.
$p^\alpha = \hat p$ is the maximal solution to the following fixed point problem
\begin{align}
\label{Eq:alg-fixedpt} \hat p &= \left(I - \Lambda^\alpha \right)\bar p + \Lambda^\alpha P\left((I - \Lambda^\alpha)\bar p + \Lambda^\alpha \hat p,q\right).
\end{align}
\end{enumerate}
After terminating the loop the clearing holdings can be computed by $y = Y(p^\alpha,q)$.
\end{alg}
The additional condition required for Algorithm~\ref{Alg:clearing} for the payment utility and utility functions states firm $i$ determines how much it pays or holds based only on the payments of the other firms $\bar p_{-i} \wedge y_{-i}$ and \emph{not} on the actualized holdings of the other firms $y_{-i}$.

We will finish our discussion of the equilibrium portfolio holdings without price impacts by considering a simple two bank example for which the clearing solution can be computed analytically.  We will refer back to this example at the end of the section on price impacts and attained equilibria as well.
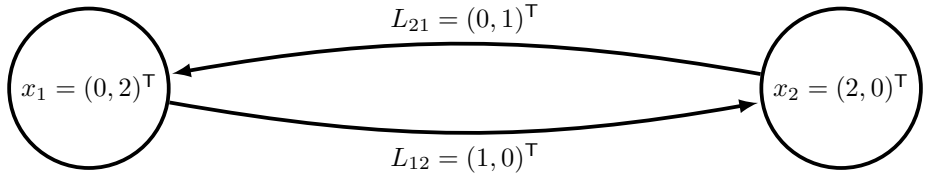
\begin{figure}[h!]
\centering
\begin{tikzpicture}
\tikzset{node style/.style={state, minimum width=0.36in, line width=0.5mm, align=center}}
\node[node style] at (0,0) (x1) {$x_1 = (0,2)^\T$};
\node[node style] at (10,0) (x2) {$x_2 = (2,0)^\T$};
\draw[every loop, auto=right, line width=0.5mm, >=latex]
(x1) edge[bend right=10] node {$L_{12} = (1,0)^\T$} (x2)
(x2) edge[bend right=10] node {$L_{21} = (0,1)^\T$} (x1);
\end{tikzpicture}
\caption{Example~\ref{Ex:no-impact}: A graphical representation of the network model with 2 banks and 2 assets which accepts more than one clearing solution.}
\label{Fig:nonunique}
\end{figure}
\begin{example}\label{Ex:no-impact}
Consider the network with $n = 2$ banks and $m = 2$ assets depicted in Figure~\ref{Fig:nonunique}.  That is, the first institution holds 2 units of the second asset and owes 1 in the first asset to the second institution; vice versa the second institution holds 2 units of the first assets and owes 1 in the second asset to the first institution.  Note that any choice of $(h_i)_{i = 1,2}$ satisfying the conditions of Lemma~\ref{Lemma:monotonic} is equivalent in equilibrium since, for both banks, all obligations are only in single assets.  Consider a utility function $u_i$ that minimizes the total amount of trading in the market (see Example~\ref{Ex:min-trading} below for more details).  Without loss of generality we will let asset $1$ denote the num\'{e}raire asset (i.e., $q_1 = 1$ throughout this example).  As discussed in Remark~\ref{Rem:proportional-unique}, for any price $q_2 > 0$, this system will have a \emph{unique} clearing solution given by:
\begin{align*}
y_1^*(q) &= \left(\begin{array}{c} \min\left(1,3q_2\right) \\ \left(2 + \min\left(1,3/q_2\right) - 1/q_2\right)^+ \end{array}\right) \qquad \text{ and } \qquad y_2^*(q) = \left(\begin{array}{c} \left(2 + \min\left(1,3q_2\right) - q_2\right)^+ \\ \min\left(1,3/q_2\right) \end{array}\right).
\end{align*}
\end{example}

\subsection{Financial contagion with market impacts}\label{Sec:mAsset}
The results on the clearing portfolio holdings without market impacts generalize the results of~\cite{EN01}.  In fact in the $m = 1$ asset case, all payment utility functions and utility functions satisfying the conditions of Lemma~\ref{Lemma:monotonic} (for example setting $h_i(p_i;y_{-i},q) := p_i^2$ and $u_i(e_i;y_{-i},q) := e_i$) recover exactly the Eisenberg-Noe payments and assets as given by Theorem 1 of \cite{EN01} if there are no market impacts.  However, market impacts (due to asset transfers undertaken by the firms) introduce further feedback effects on the firms and cannot be considered in the $m = 1$ scheme from \cite{EN01}.
In this section, we will first introduce the inverse demand functions, which we use to model the market impacts of firm behavior.  We will then use this model of market impacts to consider the existence of clearing prices and portfolio holdings, thus generalizing the prior section.  

The price of the assets is given by a vector valued \emph{inverse demand function} $F: \bbr^m \to [\underline q,\overline q] \subseteq \bbr^m_+$ for minimum and maximum prices $\underline q = (1 , \underline q_2 , \dots , \underline q_m)^\T$ and $\overline q = (1 , \overline q_2 , \dots , \overline q_m)^\T$ where the first asset is the num\'{e}raire.  
The inverse demand function maps the quantity of each asset to be sold into a price per unit in the num\'{e}raire.  
The liquidation value, in the num\'{e}raire, of the portfolio $z \in \bbr^m$ is thus given by $z^\T F(z)$.  We will impose the following assumption for the remainder of this paper.
\begin{assumption}\label{Ass:idf}
The inverse demand function $F: \bbr^m \to [\underline q,\overline q] \subseteq \bbr^m_{++}$ is continuous and nonincreasing.  
For simplicity, the inverse demand function has the form $F(z) := (1 , f_2(z_2) , \dots , f_m(z_m))^\T$ for every $z \in \bbr^m$.
\end{assumption}
\begin{remark}\label{Rem:numeraire}
In the construction of the inverse demand function, we choose to take the first asset to be the num\'{e}raire asset.  However, this assumption need not be made for the results of this work.  In fact, a fictitious num\'{e}raire asset can be chosen instead and thus we would choose $\underline q_1 < \overline q_1$ and some function $f_1(z_1)$ would be necessary as well.  Further, the assumption that no cross impacts exist in the pricing can be eliminated without affecting the results of this work so long as the inverse demand function is continuous and nonincreasing.
\end{remark}

\begin{remark}\label{Rem:idf}
Rather than introduce a num\'{e}raire, we can substitute the bid-ask matrix (see, e.g., \cite{S04}) with components $\pi_{k_1k_2}(z) := \frac{F_{k_2}(z)}{F_{k_1}(z)}$ for the inverse demand function.  
Equivalently, from the bid-ask matrix $\Pi: \bbr^m \to \bbr^{m \times m}_{++}$, we can construct a set of inverse demand functions such that the first asset is the num\'{e}raire asset by defining $F_k(z) := \pi_{1k}(z)$.
Similarly, rather than introduce the inverse demand function, we could consider the demand curve for the nonbanking sector (as done in, e.g.,~\cite{CL15}).  We consider the inverse demand function since it simplifies the formulations of this paper, though it can be constructed from the demand curve of the nonbanking sector.
\end{remark}

Additionally, though not needed for the results of this paper, we will generally assume that the inverse demand function satisfies the condition that $z \in \bbr^m \mapsto z^\T F(z)$ is a strictly increasing mapping.  That is, the liquidation value of a portfolio is strictly increasing as portfolio holdings get larger.  Note that we consider this for portfolios with short positions, i.e., $z_k < 0$, as well.
\begin{remark}\label{Rem:idf2asset}
In the case when there are $m = 2$ assets, following Assumption~\ref{Ass:idf}, we will assume throughout that $F_1 \equiv 1$ and $F_2(z) := f_2(z_2)$ for every $z \in \bbr^2$ for some continuous and nonincreasing inverse demand function $f_2: \bbr \to [\underline q_2,\overline q_2]$.  That is, the first asset will act as the num\'{e}raire asset and the price of the second asset will depend only on the number of units being bought or sold in that asset.  
In prior works, e.g.~\cite{CFS05,AFM16,feinstein2015illiquid}, the inverse demand function is only defined as being a function of non-negative units being sold; using a symmetric argument, we can define an inverse demand function on the entire real line from this half-line inverse demand function.  Consider $\hat f: \bbr_+ \to [\underline q_2,1]$ to be a continuous and nonincreasing inverse demand function such that $\alpha(z_2) := z_2 \hat f(z_2)$ is strictly increasing in $z_2 \in \bbr_+$.  Then we can define the full inverse demand function in a symmetric way as
\begin{equation}\label{Eq:idf2asset}
F_2(z) := \begin{cases}\hat f(z_2) &\text{if } z_2 \geq 0\\ \frac{1}{\hat f(\alpha^{-1}(-z_2))} &\text{if } z_2 < 0\end{cases}.
\end{equation}
The notion of symmetry is due to the fact that selling $z_2$ units of the second asset is equivalent to purchasing $\alpha(z_2)$ units of the first asset (i.e., selling $-\alpha(z_2)$ units).  Thus, when purchasing $|z_2|$ units of asset $2$, for $z_2 < 0$, we can consider selling $\alpha^{-1}(-z_2)$ units of the first asset.  With the added assumption that the first asset has the same inverse demand function $\hat f$ (when selling units of asset $1$ denominated in the second asset), and changing num\'{e}raire back to asset $1$, results in the inverse demand function as presented in~\eqref{Eq:idf2asset}.
\end{remark}

With the model of price impacts, given by the inverse demand function $F$, we want to return again to the clearing model for firm portfolio holdings.  Given an initial price $q_0 \in [\underline q,\overline q]$, Section~\ref{Sec:monotonic} provides the firm behavior $y^*(q_0)$.  However, if these sales are actualized, this leads to an updated price $q$ due to the market impact of the transfers undertaken by each firm.  In particular, the updated prices are a function of the net difference between what is initially available to each firm and the final holdings.  The initial shock $q_0 \in [\underline q,\overline q]$ would be generated by actions from agents outside our system.  That is, initially some quantity $\gamma_0 \in \bbr^m$ is transacted so that $q_0 = F(\gamma_0)$.  The clearing prices, subject to the initial shock $q_0$, are thus given by
\begin{align} 
\nonumber q &= F\left(\gamma_0 + \sum_{i = 1}^n \left(x_i^k + \sum_{j = 1}^n a_{ji}^k \left[\bar p_j^k \wedge y_j^{*k}(q)\right] - y_i^{*k}(q)\right)_{k = 1,2,...,m}\right)\\
\nonumber &= F\left(\gamma_0 + \sum_{i = 1}^n x_i + \sum_{j = 1}^n \left(\left[\sum_{i = 1}^n a_{ji}^k\right] \left[\bar p_j^k \wedge y_j^{*k}(q)\right]\right)_{k = 1,2,...,m} - \sum_{i = 1}^n y_i^*(q)\right)\\
\label{Eq:price} &= F\left(\gamma_0 + \sum_{i = 1}^n \left(x_i + \left[\bar p_i \wedge y_i^*(q)\right] - y_i^*(q)\right)\right).
\end{align}
With the feedback effects from the inverse demand function, there is the potential for increased contagion than the initial shock propagating through the no market impact case of Section~\ref{Sec:monotonic} or the single asset setting of \cite{EN01}.  However, if firms choose to purchase a distressed asset due to the decrease in price, it is possible that a mitigating feedback loop is instituted that will ultimately cause fewer defaults and improved exchange rates compared to the initial shock $q_0$.

For the ease of utilizing these results, we will now provide a summary of all assumptions for Corollary~\ref{Cor:existence}.  These are exactly those from Corollary~\ref{Cor:monotonic-unique}, Assumption~\ref{Ass:idf}, and assuming both types of utility functions are jointly continuous.
\begin{assumption}\label{Ass:existence}
Let the network be such that all firms have obligations to a societal node and no obligations from such a node in each asset, i.e., $L_{i0}^k > 0$ and $L_{0i}^k = 0$ for every firm $i$ and asset $k$.
Let the inverse demand function $F: \bbr^m \to [\underline q,\overline q]$ satisfy Assumption~\ref{Ass:idf}, i.e., be continuous and nonincreasing.
Let the payment utility functions $h_i: [0,\bar p_i] \times \bbr^{(n-1) \times m}_+ \times \bbr^m_{++} \to \bbr$ be strictly increasing, strictly concave, and supermodular in its first argument and jointly continuous for every bank $i$.
Let the utility functions $u_i: \bbr^m_+ \times \bbr^{(n-1) \times m}_+ \times \bbr^m_{++} \to \bbr$ be concave and supermodular in its first argument and jointly continuous for every bank $i$.  Additionally assume that $Y_i(y,q)$ (defined in~\eqref{Eq:holding}) is singleton-valued for any $y \in \bbr^{n \times m}$ and for all agents $i$. 
\end{assumption}

\begin{corollary}\label{Cor:existence}
Let Assumption~\ref{Ass:existence} hold.
Let $\gamma_0 \in \bbr^m$ be an initial set of transactions that result in a price shock $q_0 = F(\gamma_0) \in [\underline q,\overline q]$. There exists a fixed point price
\[q^* = F\left(\gamma_0 + \sum_{i = 1}^n (x_i + [\bar p_i \wedge y_i^*(q^*)] - y_i^*(q^*))\right)\]
and resultant portfolio holdings $y^*(q^*)$.
\end{corollary}
In comparison to prior works in the Eisenberg-Noe framework, the existence of the clearing prices does not follow from a monotonicity argument with Tarski's fixed point, but rather from Brouwer's fixed point theorem (as detailed in the appendix).

\begin{remark}
Assumption~\ref{Ass:existence} can be weakened and still guarantee existence of joint clearing portfolio holdings and prices.  In fact, there exists joint clearing holdings and prices so long as:
\begin{itemize}
\item the payment utility functions $h_i$ are jointly continuous and both strictly increasing and strictly quasi-concave in their first argument, 
\item the utility functions $u_i$ are jointly continuous and quasi-concave in their first argument, and 
\item the inverse demand function $F$ satisfies Assumption~\ref{Ass:idf}.
\end{itemize}
This can be proven using an iterated application of the Berge Maximum Theorem (see, e.g., Theorem 17.31 in~\cite{AB07}) for $P_i$ and $Y_i$ followed by an application of the Kakutani Fixed Point Theorem (see, e.g., Theorem 3.2.3 in \cite{AF90}) to attain the existence of a fixed point $(y^*,q^*)$.\fn{This statement is formalized and proven in Theorem 3.2 of an older preprint version of this text available at \url{https://arxiv.org/pdf/1702.07936v2.pdf}.}  In fact, using the logic of that proof, we can allow for \emph{continuous} admissible valuation functions (as defined in \cite{V17distress}) $\mathbb{V}_i^k(y_i^k/\bar p_i^k) \in [0,1]$ with payments $L_{ij}^k\mathbb{V}_i^k(y_i^k/\bar p_i^k)$ from firm $i$ to $j$ in asset $k$.  In this paper, we exclusively consider the continuous admissible valuation function $\mathbb{V}^{EN}(z) = 1 \wedge z^+$ for all firms and all assets.  We refer to \cite{V17distress} for further discussions on admissible valuation functions and the relation to bank distress and bankruptcy costs.  For simplicity, we focus on the stronger assumptions introduced in Assumption~\ref{Ass:existence} for the remainder of this paper.
\end{remark}

\begin{remark}
Note that we allow (and likely enforce) firms to both buy and sell assets during a fire sale, this is in contrast to earlier works such as \cite{CFS05,AFM16,feinstein2015illiquid,feinstein2016leverage}.  Such an approach is necessary to consider the cross-currency obligations exhibited in many systemic crises (e.g., \cite{DLM11}).  In such a setting, firms will transfer between the currencies or, more generally, assets in order to satisfy the different obligations.  That is, for instance, a firm in the United States may need to sell US dollars for euros in order to fulfill European liabilities, while a European firm may enact the reverse transaction within the same international financial system.  Further, we allow for solvent firms to use their excess wealth in order to maximize their utility.  By allowing this, the contagion effects of a fire sale could be partially mitigated if firms purchase an asset in a fire sale (or sell an asset being bought in excess).  The extreme mitigation in which the system has no net changes in currency holdings (i.e., all sales by one firm are purchased by a separate firm in the system) follows the model of \cite{EN01} with no illiquidity; this extreme mitigation occurs when there is a symmetry in endowments and obligations between the assets or currencies.  In contrast, the fire sales can have a large impact on the health of the various firms when there is an asymmetry in the system (e.g., between US dollars and Thai baht during the 1997 Asian financial crisis as discussed in \cite{ABK03}).  
\end{remark}

\subsection{Attained equilibrium in $m = 2$ asset case}\label{Sec:attained}
In general, uniqueness of the clearing holdings and prices is \emph{not} guaranteed.  We refer to Example~\ref{Ex:nonunique} below, which provides an illustration of multiple clearing solutions under the setting of Example~\ref{Ex:no-impact}, i.e., a simple two asset network.  However, though there may exist more than one clearing solution, the system can only attain a single equilibrium.  This section will focus on the t\^atonnement process by which an equilibrium is attained after the initial shock occurs in the $m = 2$ asset scenario.

Consider an initial price shock $q_0 \in [\underline q,\overline q]$ generated by asset transfers $\gamma_0 \in \bbr^m$.  Initially the firms would want to reach $y^*(q_0)$, but as they implement these transactions they will impact the prices in a \emph{continuous} way.  In particular, the prices will update along the direction of difference between the ``desired'' price and the current price, i.e., beginning from $q_0$
\begin{equation}\label{Eq:tatonnement}
dq_t = \left[F\left(\gamma_0 + \sum_{i = 1}^n \left(x_i + \left[\bar p_i \wedge y_i^*(q_t)\right] - y_i^*(q_t)\right)\right) - q_t\right] dt.
\end{equation}
The attained clearing solution is exactly the asymptotic solution of this process, if it exists.  This procedure is often called the t\^atonnement process in the economics literature (see, e.g.,~\cite{BFK14}).

We wish to emphasize that although the set of clearing solutions given a shock $q_0$ may not be a singleton, (assuming convergence) the t\^atonnement process can only reach a single clearing solution.  Importantly, because the set of clearing solutions is not unique, it will frequently be the case that the attained clearing solutions (as a function of the initial shock $q_0$) will be discontinuous.  These points of discontinuity match exactly those stresses in which a financial crisis becomes a systemic crises.  That is, if a marginal change in initial shock can cause a radically different clearing solution to be attained.  We refer the reader to Example~\ref{Ex:nonunique}, and in particular to Figure~\ref{Fig:nonunique-q}, for a demonstration of such an event.  This discontinuity is fundamentally a result of the nonuniqueness of equilibria.  As an initial stress grows too large, the system may not be able to sustain a financially stabilizing equilibrium anymore as firms enter insolvency and, therefore, a jump in equilibria occurs to a more extreme outcome.  Such events are systemic as they are purely a result of firm insolvency that cannot be captured by looking solely at the aggregate of the financial system. 

\begin{proposition}\label{Prop:tatonnement}
Consider the setting of Assumption~\ref{Ass:existence} with only $m = 2$ assets.  The t\^atonnement process~\eqref{Eq:tatonnement} will converge to a clearing solution.
\end{proposition}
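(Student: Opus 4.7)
The plan is to exploit the two-asset structure of Remark~\ref{Rem:idf2asset} to reduce the t\^atonnement process to a scalar autonomous ODE and then invoke one-dimensional phase-line analysis. First, without loss of generality take $F_1 \equiv 1$, so that the first asset is the num\'eraire. Since the shock is generated by $\gamma_0$ via $q_0 = F(\gamma_0)$, the initial price satisfies $q_0^1 = 1$, and the first coordinate of~\eqref{Eq:tatonnement} reads $\dot q_t^1 = 1 - q_t^1$, whose unique solution through $q_0^1 = 1$ is $q_t^1 \equiv 1$. The dynamics~\eqref{Eq:tatonnement} therefore collapse to the scalar autonomous equation
\[\dot q_t^2 = \Phi(q_t^2) - q_t^2, \qquad \Phi(\xi) := F_2\!\left(\gamma_0 + \sum_{i=1}^n \bigl(x_i + [\bar p_i \wedge y_i^*(1,\xi)] - y_i^*(1,\xi)\bigr)\right)\]
on the compact interval $[\underline q_2,\overline q_2]$, with $\Phi$ taking values in this interval by Assumption~\ref{Ass:idf}.

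Second, I would prove that $\Phi$ is continuous in $\xi$. The setting of Corollary~\ref{Cor:existence} inherits the societal-node hypothesis of Corollary~\ref{Cor:monotonic-unique}, so $q \mapsto y^*(q)$ is a well-defined single-valued map. For each firm $i$, the program~\eqref{Eq:payment-h} has jointly continuous objective $h_i$ and a jointly continuous, compact-valued constraint correspondence; strict concavity of $h_i$ forces a singleton argmax, and the Berge Maximum Theorem yields continuity of $P_i$. Applying the same theorem to~\eqref{Eq:holding}, using the singleton-valuedness of $Y_i$ assumed in Lemma~\ref{Lemma:monotonic}, gives joint continuity of $(y,q)\mapsto Y(y,q)$. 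Continuity of the fixed point $y^*(q)$ in $q$ then follows from the standard fact that a jointly continuous map on a compact domain whose fixed point is unique for every parameter value admits a continuous fixed-point selection (extract a convergent subsequence of $y^*(q_n)$ for $q_n \to q$ and pass to the limit in $y = Y(y,q)$). Composing with the continuous $F_2$ yields continuity of $\Phi$.

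Third, I would invoke elementary scalar ODE theory. The vector field $g(\xi) := \Phi(\xi) - \xi$ is continuous and bounded on $[\underline q_2,\overline q_2]$, and since $\Phi$ maps into this interval, $g(\underline q_2) \geq 0 \geq g(\overline q_2)$. Peano's theorem yields a global solution staying in $[\underline q_2,\overline q_2]$. Because the ODE is one-dimensional and autonomous, the sign of $g(q_t^2)$ is constant between consecutive zeros of $g$, so every solution $q_t^2$ is monotonic in $t$; being bounded it converges to some $\xi^* \in [\underline q_2,\overline q_2]$. If $g(\xi^*) \neq 0$, then by continuity $\dot q_t^2$ would be bounded away from zero in a neighborhood of $\xi^*$, contradicting convergence. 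Hence $\Phi(\xi^*) = \xi^*$, and $(1,\xi^*)$ is a clearing price in the sense of~\eqref{Eq:price}, with associated clearing holdings $y^*(1,\xi^*)$.

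The main obstacle is step two, namely the rigorous continuity of the composite map $q \mapsto y^*(q)$. The bilevel structure of~\eqref{Eq:payment-h}–\eqref{Eq:holding} forces an iterated application of the Berge Maximum Theorem, first to the payment problem and then to the holdings problem, and the final passage from continuity of $Y(\cdot,q)$ to continuity of its fixed point uses the societal-node uniqueness from Corollary~\ref{Cor:monotonic-unique} in an essential way; if uniqueness failed on a set of prices, $\Phi$ could develop jumps and the one-dimensional phase-line argument would have to be replaced by an argument for upper hemicontinuous differential inclusions. The restriction to $m=2$ enters precisely because only in the two-asset case does the num\'eraire reduction collapse the dynamics to one dimension, where monotonicity of trajectories is automatic.
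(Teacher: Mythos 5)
Your proposal is correct and follows essentially the same route as the paper: fix the num\'eraire so the t\^atonnement dynamics collapse to a scalar autonomous ODE on the compact interval $[\underline q_2,\overline q_2]$, and conclude that bounded monotone trajectories must converge to a zero of the vector field, i.e.\ an equilibrium. The paper's own proof is only a few lines and leaves the continuity of $q\mapsto y^*(q)$ to the proof of Corollary~\ref{Cor:existence} (where it is obtained via a closed-graph argument rather than your iterated Berge construction), so your write-up is simply a more explicit version of the same argument.
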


This t\^atonnement process, ultimately, provides the attained clearing solution which includes market impacts from the individual firm behaviors.  As mentioned previously, these market impacts can have either mitigating or exacerbating effects which cannot be captured by the Eisenberg-Noe framework with exogenous shocks only.  As we will investigate in numerical case studies in Section~\ref{Sec:casestudy}, the choice of regulatory framework and utility functions will affect the clearing solutions.  As a rule of thumb, and as expected, small shocks may be ``absorbed'' by the system, while large shocks are likely to be exacerbated and may potentially drive the price to the upper or lower bound.

\begin{example}\label{Ex:nonunique}
Consider the network with $n = 2$ banks and $m = 2$ assets depicted in Figure~\ref{Fig:nonunique} with parameters considered in Example~\ref{Ex:no-impact}.  
Consider linear price impacts on the second asset, i.e., $F_2(z) = \underline q_2 \vee (1 - bz_2) \wedge \overline q_2$ for some lower bound $\underline q_2 < \frac{1}{3}$, upper bound $\overline q_2 > 3$ on the prices, and price impact $b \in (0,1)$.  The set of clearing prices, as a function of the initial shock size $q_0 \in [\underline q,\overline q]$, are given by
\begin{align*}
Q^*(q_0) &= \begin{cases} \{q^0(q_0)\} & \text{if } \underline q_2 \leq q_{0,2} < -b + 2\sqrt{b}\\
                          \{q^0(q_0) , q^\downarrow(q_0) , q^\uparrow(q_0)\} & \text{if } -b + 2\sqrt{b} \leq q_{0,2} \leq 2b + \frac{1}{3}\\
                          \{q^\uparrow(q_0)\} & \text{if } 2b + \frac{1}{3} < q_{0,2} < 3-\frac{2}{3}b\\
                          \{q^1(q_0)\} & \text{if } 3-\frac{2}{3}b \leq q_{0,2} \leq \overline q_2 \end{cases}
\end{align*}
where the candidate clearing prices are provided by
\begin{align*}
q^\uparrow(q_0) &= \left(1,\frac{1}{2}\left[q_{0,2} + b + \sqrt{q_{0,2}^2 + 2b(q_{0,2}-2) + b^2}\right]\right)^\T\\
q^\downarrow(q_0) &= \left(1,\frac{1}{2}\left[q_{0,2} + b - \sqrt{q_{0,2}^2 + 2b(q_{0,2}-2) + b^2}\right]\right)^\T\\
q^0(q_0) &= \left(1,\left[q_{0,2} - 2b\right] \vee \underline q_2\right)^\T\\
q^1(q_0) &= \left(1,\frac{1}{2}\left[q_{0,2} + \sqrt{q_{0,2}^2 + 8b}\right] \wedge \overline q_2\right)^\T.
\end{align*}
We wish to note that $q^\uparrow((1,3-\frac{2}{3}b)^\T) = q^1((1,3-\frac{2}{3}b)^\T)$, so only one clearing solution exists at $q_{0,2} = 3-\frac{2}{3}b$.  Though multiple clearing prices exist for $q_{0,2} \in [-b + 2\sqrt{b},2b + \frac{1}{3}]$, a \emph{unique} price is attained using the t\^atonnement process.  This selector from the set of all clearing prices can be determined as a function of the initial shock $q_0 \in [\underline q,\overline q]$ to be given by
\[
q^*(q_0) = \begin{cases} q^0(q_0) & \text{if } \underline q_2 \leq q_{0,2} < -b + 2\sqrt{b}\\
                         q^\uparrow(q_0) & \text{if } -b + 2\sqrt{b} \leq q_{0,2} < 3-\frac{2}{3}b\\
                         q^1(q_0) & \text{if } 3-\frac{2}{3}b \leq q_{0,2} \leq \overline q_2 \end{cases}.
\]
Of particular note, $q^*$ is discontinuous at $q_{0,2} = -b + 2\sqrt{b}$ in general.  This provides the important notion that the system can, roughly, absorb a shock of size $-b + 2\sqrt{b}$ (with some exacerbating tendencies), but any shock larger than that will cause a near complete collapse of the system.  We refer the reader to Figure~\ref{Fig:nonunique-q} which displays the set of clearing prices $Q^*$ and the attained clearing price $q^*$ as functions of the initial shock $q_0$ where $\underline q_2 = 0.05$, $\overline q_2 = 5$, and $b = \frac{3}{8}$.  Both the full region of responses and a consideration of only downward shocks in the second asset price are presented.  Notably, at $q_{0,2} = -b + 2\sqrt{b} \approx 0.85$ the attained equilibrium price drops from roughly $0.6125$ to $0.10$.  As such, this simple system can be viewed as being able to withstand a 15\% drop in asset prices, but no more.
\begin{figure}
\centering
\begin{subfigure}[b]{.47\textwidth}
\includegraphics[width=\textwidth]{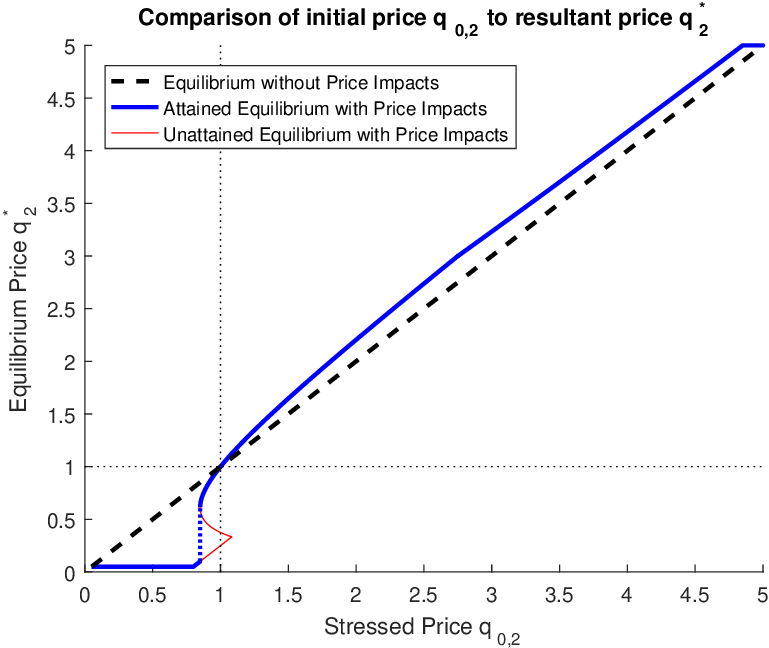}
\caption{The graph of all equilibrium prices subject to any initial shock $q_{0,2} \in [0.05,5]$.}
\label{Fig:analytical_full}
\end{subfigure}
~
\begin{subfigure}[b]{.47\textwidth}
\includegraphics[width=\textwidth]{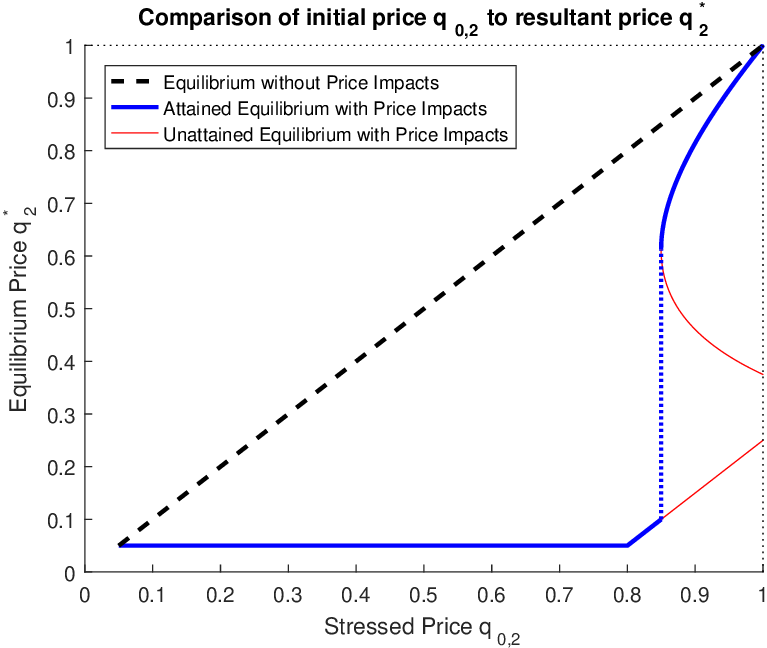}
\caption{The graph of all equilibrium prices subject to any downward shock to the second asset $q_{0,2} \in [0.05,1]$.}
\label{Fig:analytical_zoom}
\end{subfigure}
\caption{Example~\ref{Ex:nonunique}: A comparison of the set of equilibrium prices $Q^*$ to initial shocks $q_0$ with the attained equilibrium $q^*$ highlighted.}
\label{Fig:nonunique-q}
\end{figure}
\end{example}

\section{Example payment utility and utility functions}\label{Sec:examples}

In this section we present two possible choices for the payment utility function $h_i$ and three choices for the utility functions $u_i$ which satisfy Assumption~\ref{Ass:existence} and the additional conditions of Algorithm~\ref{Alg:clearing}.  For the payment utility function $h_i$ we present quadratic formulations that correspond to:
\begin{itemize}
\item \emph{Surplus transfers}: a firm only transfers from one asset to another if there is a surplus to exchange.
\item \emph{Prioritization with proportional payments}: a firm transfers all wealth to asset 1 first until that obligation is paid off in full, then attempting to fulfill obligations in the second asset, and so on through the $\mu^\text{th}$ asset, then attempting to fulfill all other obligations paying out in proportion to the total liabilities. As special cases, this setting includes both asset prioritization (i.e., all assets are ordered and given a strict seniority structure for repayment) and proportional payments (i.e., the amount of obligations filled follows the same proportion as the total liabilities).
\end{itemize}
For the utility function $u_i$ we present three options with clear meaning:
\begin{itemize}
\item \emph{Minimal trading}: a utility function such that firms choose to see how markets respond in the immediate aftermath of the crisis in order to determine their investment response, i.e., firms seek to minimize the total amount of trading between assets (once the rules to find the payments $P_i$ are taken into account).
\item \emph{Asset maximizing}: a utility function encoding a flight-to-quality which seeks to maximize the total number of units of a specific asset, at the expense of all other assets.  
\item \emph{Value maximizing}: a utility function which is given by the value of the final portfolio holdings of the firm since firms typically trade assets in order to maximize return on equity.  In particular, this utility function attempts to maximize the total pre-crisis wealth for a firm.  
\end{itemize}

\subsection{Sample payment utility functions}\label{Sec:h}
Here we will consider the details of two possible, meaningful, options for the choice of payment utility functions $h_i$ in~\eqref{Eq:payment-h}.  These are a surplus transfer rule and a prioritization of the first $\mu$ assets and proportional payments for the remainder rule.  Both of these sample payment utility functions satisfy Assumption~\ref{Ass:existence} and the conditions of Algorithm~\ref{Alg:clearing}.

\begin{example}\label{Ex:surplus}
Consider a regulatory framework in which a firm is only forced to transfer assets if there is a surplus that is not being used to cover obligations already.  
In an international financial system, such a regulatory framework would naturally exist if each (independent) regulatory body places priority on its own currency.  Any institution operating in multiple nations would be forced to follow the local regulations with their locally held endowments.  However, once a firm has satisfied all obligations in a currency, the regulatory requirements therein have been satisfied and they may exchange the surplus to any other currency still in deficit.
One possibility to describe this framework is represented mathematically by the quadratic payment utility function $h_i: [0,\bar p_i] \times \bbr^{(n-1)\times m}_+ \times [\underline q,\overline q] \to \bbr$ defined by:
\begin{align*}
h_i(p_i;y_{-i},q) &:= -\frac{1}{2}\left(c_i - p_i\right)^\T \diag\left(\left[\frac{q_1}{c_i^1-e_i^1},\frac{q_2}{c_i^2-e_i^2},\dots,\frac{q_m}{c_i^m-e_i^m}\right]^\T\right) \left(c_i - p_i\right)\\
c_i &= \bar p_i \vee e_i + \delta\\
e_i^k &= x_i^k + \sum_{j = 1}^n a_{ji}^k \left[\bar p_j^k \wedge y_j^{k}\right] \quad\quad (k = 1,2,...,m).
\end{align*}
The $\delta \in \bbr^m_{++}$ term that appears is to shift the center of the ellipsoidal level sets of $h_i$ rightward and upward from the maximum between the amount owed $\bar p_i$ and the amount held (pre-transfers) through market clearing $e_i$.  The $\delta$ is introduced solely to avoid a division by 0 in this representation of the surplus payment utility function.  In fact, this payment utility function is chosen such that the level sets are ellipsoids with center above both $\bar p_i$ and $e_i$, and such that the gradient of $h_i$ is $q$ at $e_i$.
\end{example}

\begin{example}\label{Ex:priority-proportional}
Consider a regulatory framework in which a prioritization scheme is applied to the first $\mu \in \{0,1,...,m\}$ assets, and all other assets are treated in equal proportion after those first $\mu$ assets are paid in full.  In the special cases that $\mu = 0$ this is a purely proportional payments regulation scheme, i.e., pro-rata.  In the case that $\mu = m$ this is a purely prioritized payments regulation scheme, i.e., a seniority structure as in \cite{E07}.  This may arise if asset $1$ is the local currency due to regulations favoring those payments. 
Financial institutions will pay off their balance in asset $1$ (including by transferring funds from all other assets), and only after that obligation is fulfilled will they begin filling asset $2$ and so on down the line until they pay off asset $\mu$.  Only after asset $\mu$ is paid off in full will the other obligations be paid, which will be done in proportion to the obligations for assets $\mu+1$ through $m$.  Mathematically we will define the payment utility function $h_i^\mu: [0,\bar p_i] \times \bbr^{(n-1) \times m}_+ \times [\underline q,\overline q] \to \bbr$ by
\begin{align*}
h_i^\mu(p_i;y_{-i},q) &:= -\frac{1}{2}\left(c_i - p_i\right)^\T \diag\left(\left[\frac{q_1}{c_i^1 - s_i^1} , \dots , \frac{q_{\mu}}{c_i^{\mu} - s_i^{\mu}} ,  \frac{q_{\mu + 1}}{c_i^{\mu + 1} - \pi \bar p_i^{\mu + 1}} , \dots , \frac{q_m}{c_i^m - \pi \bar p_i^m}\right]^\T\right) \left(c_i - p_i\right)\\
c_i &= \bar p_i + \delta \\
s_i^k &= \bar p_i^k \wedge \frac{\left(\sum_{j = 1}^m q_j e_i^j - \sum_{j = 1}^{k-1} q_j s_i^j\right)^+}{q_k} \quad\quad (k = 1,2,...,\mu)\\
\pi &= \frac{\left[\left(\sum_{k = 1}^m q_k \bar p_i^k\right) \wedge \left(\sum_{k = 1}^m q_k e_i^k\right)\right] - \sum_{k = 1}^{\mu} q_k s_i^k}{\sum_{k = 1}^m q_k \bar p_i^k - \sum_{k = 1}^{\mu} q_k s_i^k}\\
e_i^k &= x_i^k + \sum_{j = 1}^n a_{ji}^k \left[\bar p_j^k \wedge y_j^{k}\right] \quad\quad (k = 1,2,...,m).
\end{align*}
As with the surplus payment utility function, we choose the quadratic form to create ellipsoidal level sets with center $c_i$ (above $\bar p_i$), and such that the gradient of $h_i$ is $q$ at the point on the feasible line where all wealth is in asset $1$ if less than $\bar p_i^1$, or $\bar p_i^1$ wealth is in asset $1$, and so on until asset $\mu$, and the remaining assets are along the proportionality line.
\end{example}

\subsection{Sample utility functions}\label{Sec:u}
Here we will consider the details of three possible, meaningful, options for the choice of utility functions $u_i$ in~\eqref{Eq:holding}.  These are the utility function that leads to minimizing the total size of transfers, the utility function that prioritizes holding a specific asset, and the utility function given by the pre-fire sale priced final wealth of the firm.  All three of these sample utility functions satisfy Assumption~\ref{Ass:existence} and the conditions of Algorithm~\ref{Alg:clearing}.

\begin{example}\label{Ex:min-trading}
Consider the case where firms wish to make the smallest possible trades in order to meet their obligations, and trade no more once that occurs.
Such a setting may be appropriate when a firm is concerned about uncertainty in the rightful exchange rates.  With such uncertainty a firm may choose to minimize their own impact and wait for the market response to take shape before responding.
Essentially, this is a ``wait and see'' approach to investing during a crisis.  Firms would then choose to rebalance over time as prices fluctuate after the crisis studied in this work.  Due to the static nature of the model studied in this work, this allows us to capture the immediate aftermath of a crisis but not the long term effects that may be felt.
We can define the utility function for~\eqref{Eq:holding} by
\[u_i(e_i;y_{-i},q) := -\left\|(x_i^k + \sum_{j = 1}^n a_{ji}^k [\bar p_j^k \wedge y_j^{k}])_k - e_i\right\|_2^2.\] 
That is, the holdings for firm $i$ after trading would be the closest feasible point (based on the Euclidean norm) to the initial network model before trading occurs.
In particular, by definition of the norm, $u_i$ is jointly continuous, strictly concave, and supermodular in its first component.
\end{example}

\begin{example}\label{Ex:max-asset}
Consider the case where a firm may wish to maximize their holdings in a specific asset $k^* \in \{1,2,...,m\}$ at the expense of all other assets.  
In a systemic crises, firms may choose to sell higher risk assets or currencies in order to purchase safer ones in a \emph{flight-to-quality}. 
This has occurred in practice during currency crises such as the 1997 Asian financial crisis when firms bought US dollars due to the collapse of local currencies despite the market moving against them in such transactions.
This can be modeled by a firm wishing to maximize the holdings in the safe asset at the expense of all others.
We can define the utility function for~\eqref{Eq:holding} by
\[u_i(e_i;y_{-i},q) := e_i^{k^*} - \bar p_i^{k^*}.\] 
That is, firm $i$ will solely seek to maximize their holdings in asset $k^*$, without consideration of any other assets.  In particular, this is trivially jointly continuous, concave, and supermodular in its first component. 
\end{example}

\begin{example}\label{Ex:wealth}
Consider the case where firms wish to maximize their own net worth (in the num\'{e}raire) given the pre-fire sale prices.  
Such a setting is appropriate when a firm has the belief that the pre-fire sale prices are the ``true'' value of the assets.  In such a view, any change from this price is due to the current crisis, but will rebound to the pre-fire sale prices after the crisis is over.  Thus a firm would wish to purchase assets at a discount (or sell at a premium) in order to obtain a good deal.
In this case we seek to maximize 
\[u_i(e_i;y_{-i},q) := \left(e_i-\bar p_i\right)^\T F(0),\]
which is jointly continuous, concave, and supermodular on the domain of interest.  Additionally, under the condition that $P_i$ is a singleton (as in the assumptions of Lemma~\ref{Lemma:monotonic}), we can recover the resultant utility maximizer $Y_i(y,q)$ is unique so long as $q \neq \lambda F(0)$ for every $\lambda \in \bbr_{++}$. 

Compare this utility function to the welfare maximizing utility, i.e., when a regulator wishes to maximize the welfare of the aggregate system of financial institutions as measured by $\sum_{i = 1}^n u_i(e_i;y_{-i},q)$ over all firm holdings $e_i$ simultaneously.  With this utility function, since firms are purchasing asset $k$ if $q_k < F_k(0)$ and selling if $q_k > F_k(0)$, the Nash equilibrium clearing prices will correspond with the welfare maximizing clearing prices since firm behavior will be identical under both considerations.
\end{example}

\section{Numerical case studies}\label{Sec:casestudy}
In this section we will consider two numerical implementations of the financial contagion framework considered in the prior sections.  The first study is a toy implementation of the example payment utility and utility functions presented in Section~\ref{Sec:examples} to demonstrate how they affect the \emph{attained} equilibrium prices.  We will then implement a brief study of the European financial system, calibrated with data from the European Banking Authority, to compare the equilibrium solution under a single currency (as in \cite{EN01}) to the counterfactual under which the Greek drachma were reinstated during an actualized Grexit event.

\begin{example}\label{Ex:2asset_compare}
For a first illustrative example, consider a network with two currencies.  As assumed throughout this work, let the first currency act as the num\'{e}raire asset, i.e., $F_1 \equiv 1$.  To model the market impacts, let the inverse demand function for the second currency be given by
\begin{align*}
F_2(z) &= \begin{cases} \hat f(z_2) &\text{if } z_2 \geq 0\\ \frac{1}{\hat f\left(\alpha^{-1}(-z_2)\right)} &\text{if } z_2 < 0\end{cases} \quad \text{ for } \quad \hat f(z) = \frac{3\tan^{-1}\left(-z\right) + 2\pi}{2\pi}
\end{align*}
where $\alpha(z_2) := z_2 \hat f(z_2)$ is the number of units of the first currency being purchased when $z_2 \in \bbr$ units of the second currency are being sold.  See Remark~\ref{Rem:idf2asset} for a discussion of the symmetry argument inherent in this choice of inverse demand function.

We will consider a system with 20 firms and a societal node, as introduced in Corollary~\ref{Cor:monotonic-unique}.  As this is an illustrative example only, we will consider a single realization of a random financial network.  In both currencies, independently, each pair of firms has a 25\% probability of having a connection of size 1.  Additionally, every firm owes 1 of both currencies to the external node, which owes nothing back into the system.  All firms begin with i.i.d.\ random endowments uniformly chosen between 0 and 20, which is then split evenly between the two currencies.   For demonstration purposes to show how the different regulation schemes and utility functions alter the equilibrium of the system as the initial shock $q_0$ varies, we consider the surplus, priority, and proportional regulation schemes (Examples~\ref{Ex:surplus} and \ref{Ex:priority-proportional} with $\mu = 2$ and $\mu = 0$ respectively) under the minimum trading utility function (Example~\ref{Ex:min-trading}) and value maximizing utility function (Example~\ref{Ex:wealth}).  In each scenario the societal node will follow the minimum trading utility function.  
Additionally, for illustration of the set of clearing solutions, we will also show all equilibrium prices without any initial shock, i.e.\ $q_0 = F(0) = (1,1)^\T$, for the different regulation schemes and utility functions.

Figure~\ref{Fig:2asset_compare} displays the prices attained from the t\^atonnement process $q^*_2(q_0)$ given an initial price of $q_0 = (1,q_{0,2})^\T$ both with and without market impacts.  We note that the attained process need not be continuous in the initial price $q_0$.  Note that only a single curve for the value maximizing utility is shown as all three payment utility schemes produce virtually indistinguishable curves under that utility.  Further note that under the value maximizing utility the \emph{unique} equilibrium price is given by the unshocked price $F(0) = (1,1)^\T$ for nearly any initial price $q_0$ and mitigates the shock for any initial price.  Additionally, the value maximizing utility produces a continuous equilibrium response as a function of the shocked price $q_0$.  
We would also like to point out that the attained clearing prices need not be continuous (as also demonstrated in Example~\ref{Ex:nonunique}).  It appears that all three regulatory environments jump equilibria at low values of $q_{0,2}$.  Specifically, under the minimal trading utility function, the surplus payment utility function jumps equilibria at $q_{0,2}^s \approx 0.605$ from $q_2^*(q_0^s+\epsilon) \approx 2.695$ to $q_2^*(q_0^s-\epsilon) \approx 0.364$.  Similarly, the priority payment utility function jumps equilibria at $q_{0,2}^{\mu=2} \approx 1.413$ from $q_2^*(q_0^{\mu=2}+\epsilon) \approx 2.956$ to $q_2^*(q_0^{\mu=2}-\epsilon) \approx 0.285$.  Finally, the proportional payment utility function jumps equilibria at $q_{0,2}^{\mu=0} \approx 0.875$ from $q_2^*(q_0^{\mu=0}+\epsilon) \approx 3.001$ to $q_2^*(q_0^{\mu=0}-\epsilon) \approx 0.331$.
\begin{figure}[h]
\centering
\includegraphics[width=0.75\textwidth]{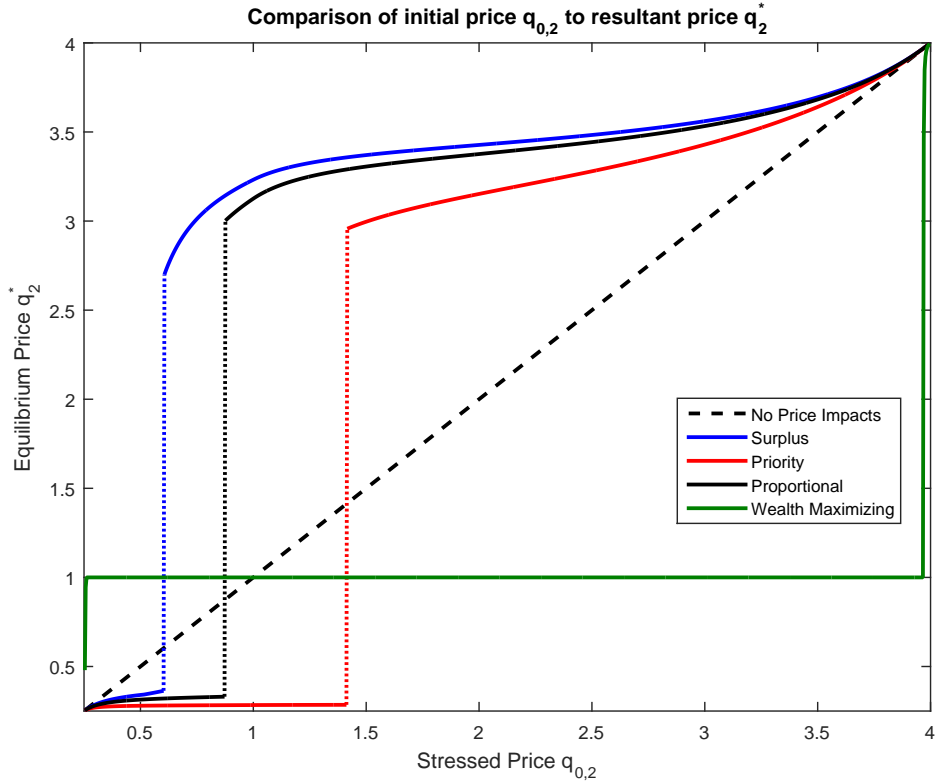}
\caption{Example~\ref{Ex:2asset_compare}: A comparison of the attained clearing prices both without market impacts and with market impacts under surplus, priority, and proportional regulation schemes in a 2 currency system under minimum trading and value maximizing utility.}
\label{Fig:2asset_compare}
\end{figure}

For demonstrative purposes, we display the set of all clearing prices and defaulting banks under the four scenarios in Table~\ref{Table:2asset_compare} under no initial shock, i.e., $q_0 = F(0)$.  We would like to point out that the priority regulatory scheme, under the minimum trading utility function, results in a unique equilibrium in this setting, with the equilibrium price being near the lower bound $\underline q_2 = \frac{1}{4}$.  This is due to the forced liquidation of currency 2 for currency 1, which creates a significant asymmetry in the trading not present in, e.g., the proportional regulation scheme.  We further note that the surplus regulation scheme, though symmetric in construction, is asymmetric in equilibrium.  This demonstrates the concept that the particular realization of the network plays a significant role in directing symmetry (i.e., higher liabilities or lower assets in one currency will skew the equilibrium results).  Finally, the proportional regulation scheme results in multiple clearing solutions thus providing a counterexample to uniqueness of the joint clearing holdings and prices.  Notably, choosing different regulatory and utility functions causes different firms to default in equilibrium.  

\begin{table}
\centering
\begin{tabular}{|l|l|c||*{8}{c|}}
\hline
\textbf{Utility} & \textbf{Regulation} & \textbf{Price} $q_2^*$ & 7 & 8 & 9 & 10 & 13 & 16 & 17 & 19 \\
\hline
\hline
\multirow{5}{*}{Minimum trading} & Surplus & 3.2324 & X & X & X & X & & & & X \\ \cline{2-11}
                                 & Priority & 0.2812 & X & X & & & X & X & X & X \\ \cline{2-11}
                                 & \multirow{3}{*}{Proportional} & 0.3340 & X & X & & & X & X & X & X \\ \cline{3-11}
                                 &                               & 0.7625 & X & X & & & X & & X & X \\ \cline{3-11}
                                 &                               & 3.1252 & X & X & X & X & & & & X \\ \hline
Value maximizing & Proportional & 1.0000 & X & X & & & X & & X & X \\
\hline   
\end{tabular}
\caption{Example~\ref{Ex:2asset_compare}: A comparison of clearing price and defaulting firms without initial shock, i.e., $q_0 = F(0)$, under the surplus, priority, and proportional regulation schemes in a 2 currency system under minimum trading and value maximizing utility.}\label{Table:2asset_compare}
\end{table}
\end{example}

\begin{example}\label{Ex:Greece}
Let us now consider an example calibrated from data.  We will calibrate a network model to the 2011 European banking dataset from EBA that has been used in prior studies (e.g., \cite{GV16,CLY14}) under the financial contagion framework of~\cite{EN01}.  Though we utilize this dataset to have a more realistic network, the approach for calibration still requires heuristics, as such this example is still for demonstrative purposes only.\fn{Due to complications with the calibration methodology, we only consider 87 of the 90 institutions. DE029, LU45, and SI058 were not included in this analysis.}

As a stylized bank balance sheet, we will consider two categories of assets: \emph{interbank assets} $\sum_{j = 1}^n L_{ji}$ and \emph{endowments} $x_i$.  We will additionally consider three categories of liabilities: \emph{interbank liabilities} $\sum_{j = 1}^n L_{ij}$, \emph{external liabilities} $L_{i0}$, and \emph{capital} $c_i$.  First, we will briefly discuss how to calibrate the \cite{EN01} model, i.e., when all values are denominated in the num\'{e}raire asset only.  The EBA dataset provides information on the total assets $T_i$, capital $c_i$, and interbank liabilities $\sum_{j = 1}^n L_{ij}$.  To determine the variables necessary for the \cite{EN01} model we will assume, as in \cite{CLY14,GY14}, that the interbank liabilities equal interbank assets $\sum_{j = 1}^n L_{ij} = \sum_{j = 1}^n L_{ji}$.  
We will, however, modify this condition slightly as discussed in \cite{GV16}; we will perturb the interbank assets a small amount to satisfy a technical condition of \cite{GV16}.
Additionally, we assume that all assets not a part of the interbank assets are endowments and all liabilities not capital or owed to other banks are owed to the societal node $0$.  Under these assumptions, given the provided values, we determine the remainder of the stylized balance sheet via
\begin{align*}
x_i &:= T_i - \sum_{j = 1}^n L_{ij}, \quad L_{i0} := T_i - \sum_{j = 1}^n L_{ij} - c_i, \;\; \text{ and } \;\; \bar p_i := L_{i0} + \sum_{j = 1}^n L_{ij}.
\end{align*}
Under this calibration, the net worth of firm $i$ is equal to its capital, i.e., $c_i = T_i - \bar p_i$.

In order to complete the \cite{EN01} system, we need the full nominal liabilities matrix $L$.  This, however, is not provided in the EBA dataset.  Thus we will utilize the methodology of \cite{GV16} in order to estimate one such matrix consistent with the asset and liability data discussed above.  We consider a single realization of the nominal liabilities matrix $L$ given the algorithm of \cite{GV16} with parameters $p = 0.5$, $\text{thinning} = 10^4$, $n_{\text{burn-in}} = 10^9$, and $\lambda = \frac{p n (n-1)}{\sum_{i = 1}^n \sum_{j = 1}^n L_{ij}} \approx 0.00122$.

First, as a baseline model, we run the financial contagion model of \cite{EN01} to determine the ``factual'' response in the scenario that Greece remains in the Eurozone and thus only a single currency is utilized.  In this scenario, assuming no external stresses, we find that none of the 87 banks would default on its obligations; this comports with reality since none of the firms failed in late 2011.  Additionally, as this model only considers a single asset, there are no fire sales evidenced either.

Now, we wish to consider the counterfactual scenario in which Greece were not a member of the Eurozone and had its own currency, the drachma, once more (i.e., the Grexit scenario).  In order to update the calibration to include both the euro and drachma, we need to consider also 
the total (non-sovereign) exposures that each bank has to Greece $GE_i$.\fn{Sovereign exposures to Greece were orders of magnitude smaller and thus their inclusion would not have significantly affected the final model.}  For notational simplicity let $N = \{1,2,...,87\}$ be the set of all banks and $G \subseteq N$ be the set of the six Greek banks in the EBA dataset.  Then using the calibrated assets and liabilities to the \cite{EN01} framework (henceforth denoted $x^{EN}$ and $L^{EN}$) we update the assets and liabilities to be
\begin{align*}
x_i^1 &:= x_i^{EN} - GE_i, &x_i^2 &:= GE_i \quad\quad &\forall& i \in N \backslash G\\
x_i^1 &:= 0, &x_i^2 &:= x_i^{EN} \quad\quad &\forall& i \in G\\
L_{ij}^1 &:= L_{ij}^{EN}, &L_{ij}^2 &:= 0 \quad\quad &\forall& i \in N \backslash G \; \forall j \in N \cup \{0\}\\
L_{ij}^1 &:= L_{ij}^{EN}, &L_{ij}^2 &:= 0 \quad\quad &\forall& i \in N \; \forall j \in N \backslash G\\
L_{ij}^1 &:= 0, &L_{ij}^2 &:= L_{ij}^{EN} \quad\quad &\forall& i \in G \; \forall j \in G \cup \{0\}.
\end{align*}
where the first asset is the euro and the second is the drachma.  That is, the assets of the non-Greek banks are denominated in drachmas for the amount that was exposed to Greece and the rest remains denominated in euros.  In contrast, all endowments held by Greek banks are re-denominated in the drachma.  Additionally, obligations from a Eurozone bank to the societal node is denominated in the euro and all obligations from a Greek bank to the societal node is denominated in drachmas.  Finally, all interbank liabilities between two Greek banks is re-denominated in drachmas, otherwise all interbank liabilities remain in euros.

In incorporating two assets we need to discuss the inverse demand function.  Consider an inverse demand function of the form of that in Example~\ref{Ex:2asset_compare}.  That is, let the first asset (euro) acts as the num\'{e}raire asset, i.e., $F_1 \equiv 1$, and let the inverse demand function for the second asset (drachma) be given by
\begin{align*}
F_2(z) &= \begin{cases} \hat f(z_2) &\text{if } z_2 \geq 0\\ \frac{1}{\hat f\left(\alpha^{-1}(-z_2)\right)} &\text{if } z_2 < 0\end{cases} \quad \text{ for } \quad \hat f(z) = \frac{4\tan^{-1}\left(-bz\right) + 3\pi}{3\pi}
\end{align*}
where $b \geq 0$ is the market impact parameter and $\alpha(z_2) := z_2 \hat f(z_2)$ is the number of units of euros being purchased when $z_2 \in \bbr$ units of drachmas are being sold.  See Remark~\ref{Rem:idf2asset} for a discussion of the symmetry argument inherent in this choice of inverse demand function.  We will first consider setting the market impacts to a fixed level, i.e.\ $b = 10^{-4}$, then considering the effects of changing the price impacts.

Finally, we need to consider some payment utility and utility functions for this setting.  We will consider that all firms will follow a priority regulation scheme (Example~\ref{Ex:priority-proportional} with $\mu = 2$) in which they prioritize obligations in the local currency due to the preferences of the regulators.  That is, Eurozone banks will prioritize payments in the euro and Greek banks will prioritize payments in the drachma.  Additionally, we will assume that all firms (and the societal node) will follow the minimal trading utility function (Example~\ref{Ex:min-trading}).  This follows from the presupposition that the initial exchange rate (without loss of generality set to $F(0) = (1,1)^\T$) would not be trusted by the various institutions due to fear of fire sales of the new drachma.  Therefore, due to uncertainty in the ``true'' exchange rate, firms will be conservative and do as little trading as necessary. If instead we supposed that firms would want to maximize their assets in the euro as a flight-to-stability (Example~\ref{Ex:max-asset}) then we would see a total collapse of the drachma value but similar final results.

Now, with the multiple currency network calibrated to the setting that Greece was forced out of the Eurozone, we can simulate this systemic event.  For this case study we will only consider the setting without an initial shock, i.e.\ $q_0 = F(0)$ and $\gamma_0 = 0$, and with price impacts given by $b = 10^{-4}$.  Due to the choice of priority regulation scheme, the response to external stresses to the value of the drachma (i.e., with $q_{0,2} < 1$) would only marginally impact the final equilibrium.
Figure~\ref{Fig:2asset_Greece} displays the updated prices $F(\sum_{i = 1}^n (x_i + [\bar p_i \wedge y_i^*(q)] - y_i^*(q)))$ given an initial price of $q = (1,q_2)^\T$ after one iteration of the fixed point problem.  We note that the resultant curve is continuous because of the  continuity of the both the inverse demand function and the unique holdings $y^*(q)$ (see the proof of Corollary~\ref{Cor:existence}).  The marked point on the curve shows the equilibrium price of $q^* = (1,0.44331)$, i.e., after clearing the value of the drachma would fall to 44.331\% of its former value compared to the euro.  At this equilibrium price, we found that three banks would fail, though none of these banks were situated in Greece.  This is due to a large amount of Greek bank liabilities (interbank and to society) being drachma denominated and thus insulated from the fall in drachma value.\fn{Though the greater Greek economy would likely suffer under such a large currency move.}  However we found that both banks from Cyprus included in this dataset (Marfin Popular Bank and the Bank of Cyprus) and the Banco Comercial Portugues would fail due to large exposures to Greece.\fn{We would like to note that all three of these defaulting banks received a bailout or government intervention in either 2012 or 2013.}  While the Cypriot banks had, in relative terms, an order of magnitude more exposures to Greece than any other non-Greek bank, the Banco Comercial Portugues had the third highest relative exposure to Greece.
Thus we are able to see that a crisis focused on Greece, with an endogenous stress to the rest of the Eurozone, is able to spread to institutions in the rest of the Eurozone.  In particular, if our study had coupled the Grexit event with some exogenous stress to the different banks balance sheets, as would likely occur in such an event, we would find a larger number of defaults in both the Eurozone and Greece.

We wish to finish this example by considering the effects of changing the price impact parameter $b$ which previously was fixed at $10^{-4}$.  All other parameters are kept constant from the previous considerations.  Notably this includes the assumption that there is no initial crisis and all price movements are the result of the actions of the firms under consideration.  Figure~\ref{Fig:2asset_Greece_impact} displays the attained equilibrium prices under changes to the price impact parameter $b$. Notably even a small level of price impacts causes a large drop in the value of the Greek drachma in relation to the euro.  This provides us with a level of confidence in the determined results we found for the price impact $b = 10^{-4}$ even though this inverse demand function was not calibrated to data in the manner that the balance sheets were. 

\begin{figure}[h]
\centering
\begin{subfigure}[t]{.47\textwidth}
\vskip 0pt
\includegraphics[width=\textwidth]{2asset_Greece.eps}
\caption{A graphical representation of the output of the fire sale prices (with price impact $b = 10^{-4}$).  The marked point indicates the unique equilibrium price.  The text box provides key information on the equilibrium price and banking failures experienced.}
\label{Fig:2asset_Greece}
\end{subfigure}
~
\begin{subfigure}[t]{.47\textwidth}
\vskip 0pt
\includegraphics[width=\textwidth]{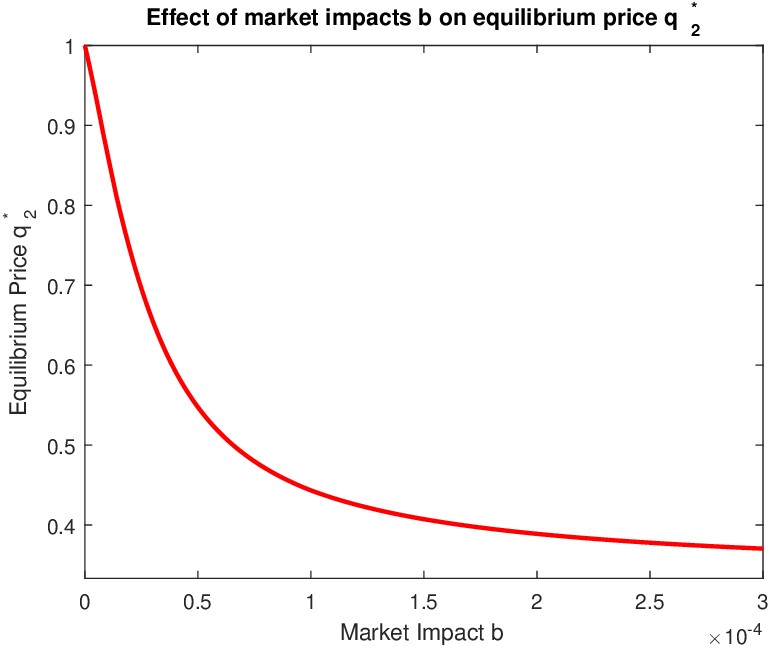}
\caption{The effects of the price impact parameter $b \geq 0$ on the attained clearing prices of the Greek drachma denominated in euros.}
\label{Fig:2asset_Greece_impact}
\end{subfigure}
\caption{Example~\ref{Ex:Greece}: Graphical representations for fire sale prices of the Greek drachma denominated in euros under a priority regulation scheme and minimum trading utility without any initial stress.}
\end{figure}

\end{example}

\section{Conclusion}\label{Sec:conclusion}

In this paper we considered an extension of the financial contagion model of \ci{EN01} to allow for obligations in more than one asset.  In doing so, we have written a mathematical model that incorporates more realistic elements to financial contagion including obligations in multiple currencies and allowing for solvent firms to purchase or sell assets beyond those required to satisfy obligations (as is generally assumed in, e.g., \cite{CFS05,AFM16,feinstein2015illiquid,feinstein2016leverage}).  
Under markets without price impacts, we proved the existence and uniqueness of the equilibrium portfolio holdings in which each firm is a utility maximizer.  We then generalized this result to prove existence of clearing prices in markets with price impacts.  Additionally, we consider the t\^atonnement process to determine which equilibria to which the market would converge in the $2$ asset setting.
Numerical case studies were undertaken to demonstrate the utility of the proposed model and how the choice of payment regulatory framework may impact, e.g., the realized exchange rates.  In particular, we consider a stylized example of the European, and specifically Greek, financial system under the counterfactual condition that the Greek drachma were reinstated.

\appendix
\section{Proofs for Section~\ref{Sec:model}}
\begin{proof}[Proof of Lemma~\ref{Lemma:monotonic}]
Fix $q \in \bbr^m_{++}$.
\begin{enumerate}
\item Define $G_i: [0,\bar p_i] \to \bbr$ to be the linear mapping $G_i(p_i) = q^\T p_i$ for any $p_i \in [0,\bar p_i]$.  Noting that the (convex) budget constraint of $P_i(y,q)$ is equivalently given by 
\[p_i \in G_i^{-1}\left(\left(-\infty,\sum_{k = 1}^m q_k \left(x_i^k + \sum_{j = 1}^n a_{ji}^k \left[\bar p_j^k \wedge y_j^k\right]\right)\right]\right).\]
Further, the upper bound from the budget constraint is nondecreasing in $y$.  Utilizing the strict concavity of the payment utility function $h_i$ to guarantee the uniqueness of the maximizer $P_i(y,q)$ for any $y \in \bbr^{n \times m}_+$, it follows that $P_i(\cdot,q)$ is nondecreasing by Corollary 2(ii) of \cite{Quah2007}.

Now we wish to show that $Y_i(\cdot,q)$ is nondecreasing as well.  That is, $Y_i(y,q) \leq Y_i(y',q)$ for any $y,y' \in \bbr^{n \times m}_+$ with $y \leq y'$.  Take $y,y' \in \bbr^{n \times m}_+$ with $y \leq y'$.  If $P_i(y,q) \neq \bar p_i$ then, by construction and the monotonicity of the payment function $P_i(\cdot,q)$, we find
\[Y_i(y,q) = P_i(y,q) \leq P_i(y',q) \leq Y_i(y',q).\]
Let $P_i(y,q) = \bar p_i$ (and thus $P_i(y',q) = \bar p_i$ as well by the monotonicity of $P_i(\cdot,q)$).  Now define $G_i: \left(\bar p_i + \bbr^m_+\right) \to \bbr$ as the same linear map as above, i.e., $G_i(e_i) = q^\T e_i$.  As with the payment function, the feasible regions for $Y_i(y,q)$ and $Y_i(y',q)$ can, respectively, be provided by
\begin{align*}
G_i^{-1}&\left(\left(-\infty,\sum_{k = 1}^m q_k \left(x_i^k + \sum_{j = 1}^n a_{ji}^k \left[\bar p_j^k \wedge y_j^k\right]\right)\right]\right),\\
G_i^{-1}&\left(\left(-\infty,\sum_{k = 1}^m q_k \left(x_i^k + \sum_{j = 1}^n a_{ji}^k \left[\bar p_j^k \wedge y_j'^k\right]\right)\right]\right).
\end{align*}
Again, the upper bound for this interval is nondecreasing in the portfolio holdings parameter.  Under the assumption that $Y_i(y,q)$ and $Y_i(y',q)$ are unique maximizers, we apply Corollary 2(ii) of \cite{Quah2007} to find $Y_i(y,q) \leq Y_i(y',q)$.

Finally, we apply the Tarski fixed point theorem (see, e.g., Theorem 11.E of \cite{Z86}) to the mapping $Y(\cdot,q): \bbr^{n \times m}_+ \to \bbr^{n \times m}_+$ to recover the result.
\item From~\eqref{Lemma:monotonic-exist} we know that $(y_i^{\uparrow}(q)-\bar p_i)^+ \geq (y_i^{\downarrow}(q)-\bar p_i)^+$.  By way of contradiction, assume there exists an institution $i$ and asset $k$ such that $(y_i^{\uparrow k}(q)-\bar p_i^k)^+ > (y_i^{\downarrow k}(q)-\bar p_i^k)^+$.  This immediately implies $\sum_{i = 1}^n q^{\T}(y_i^{\uparrow}(q)-\bar p_i)^+ > \sum_{i = 1}^n q^{\T}(y_i^{\downarrow}(q)-\bar p_i)^+$ by $q_k > 0$ for every asset $k$.  However,
\begin{align*}
\sum_{i = 1}^n q^{\T}(y_i^{\uparrow}(q) - \bar p_i)^+ &= \sum_{i = 1}^n q^{\T}(y_i^{\uparrow}(q) - [\bar p_i \wedge y_i^{\uparrow}(q)])\\
&= \sum_{i = 1}^n (q^{\T}x_i + \sum_{k = 1}^m q_k \sum_{j = 1}^n a_{ji}^k [\bar p_j^k \wedge y_j^{\uparrow k}(q)] - q^{\T}[\bar p_i \wedge y_i^{\uparrow}(q)])\\
&= \sum_{i = 1}^n q^{\T}x_i + \sum_{k = 1}^n q_k \sum_{j = 1}^n [\bar p_j^k \wedge y_j^{\uparrow k}(q)] \sum_{i = 1}^n a_{ji}^k - \sum_{i = 1}^n q^{\T}[\bar p_i \wedge y_i^{\uparrow}(q)]\\
&= \sum_{i = 1}^n q^{\T}x_i + \sum_{j = 1}^n q^{\T}[\bar p_j \wedge y_j^{\uparrow}(q)] - \sum_{i = 1}^n q^{\T}[\bar p_i \wedge y_i^{\uparrow}(q)]\\
&= \sum_{i = 1}^n q^{\T}x_i = \sum_{i = 1}^n q^{\T}(y_i^{\downarrow}(q) - \bar p_i)^+ 
\end{align*}
where the last equality follows by applying the same operations to $y_i^{\downarrow}$ in the reverse order.
This provides our contradiction and thus $(y_i^{\uparrow}(q)-\bar p_i)^+ = (y_i^{\downarrow}(q)-\bar p_i)^+$ for every bank $i$.
\end{enumerate}
\end{proof}

\begin{proof}[Proof of Corollary~\ref{Cor:monotonic-unique}]
First, if bank $i$ has positive equity, then by Lemma~\ref{Lemma:monotonic}\eqref{Lemma:monotonic-equity} it immediately follows that $y_i^{\uparrow}(q) = y_i^{\downarrow}(q)$.  
In particular this must be true for node $0$ as it has positive equity by definition.  
For notational purposes, let $E^+ := \{i \in \{1,2,...,n\} \; | \; y_i^{\downarrow}(q) \geq \bar p_i\}$ be the set of firms with positive equity.  
Let us assume there exists some firm $i \not\in E^+$ and asset $k$ such that $y_i^{\uparrow k}(q) > y_i^{\downarrow k}(q)$, then immediately the mark-to-market value of the equity of the societal node $0$ satisfies
\begin{align*}
q^{\T}y_0^{\uparrow}(q) &= \sum_{k = 1}^m q_k \sum_{j \in E^+} a_{j0}^k \bar p_j^k + \sum_{k = 1}^m q_k \sum_{j \not\in E^+} a_{j0}^k y_j^{\uparrow k}(q)\\
&> \sum_{k = 1}^m q_k \sum_{j \in E^+} a_{j0}^k \bar p_j^k + \sum_{k = 1}^m q_k \sum_{j \not\in E^+} a_{j0}^k y_j^{\downarrow k}(q) = q^{\T}y_0^{\downarrow}(q).
\end{align*}
But this is a contradiction to $y_0^{\uparrow}(q) = y_0^{\downarrow}(q)$. 
\end{proof}

\begin{proof}[Proof of Corollary~\ref{Cor:existence}]
First, we wish to prove that, given uniqueness (guaranteed by Corollary~\ref{Cor:monotonic-unique}) of the portfolio holdings under a fixed price $q$, the equilibrium holdings $y^*: [\underline q,\overline q] \to \bbr^{n \times m}_+$ are continuous.  Theorem A.2 of~\cite{feinstein2014measures} guarantees that the graph of $y^*$ is closed in the product topology.  Now we note that the range space that the holdings can attain is, in fact, the convex and compact set $\prod_{i = 1}^n \bar E_i$ where
\[\bar E_i = \prod_{k = 1}^m \left[0,\frac{1}{\underline q_k} \sum_{l = 1}^m \overline q_l \left(x_i^l + \sum_{j = 1}^n a_{ji}^l \bar p_j^l\right)\right].\]
Therefore, by the closed graph theorem (see, e.g.,~\cite[Theorem 2.58]{AB07}) continuity is proven.  This allows us to directly apply the Brouwer fixed point theorem (see, e.g.,~\cite[Corollary 17.56]{AB07}) to find an equilibrium price
\[q^* = F\left(\sum_{i = 1}^n (x_i + [\bar p_i \wedge y_i^*(q^*)] - y_i^*(q^*))\right).\]
\end{proof}

\begin{proof}[Proof of Proposition~\ref{Prop:tatonnement}]
Define $\alpha: [\underline q,\overline q] \to \bbr^m$ by $\alpha(q) := \sum_{i = 1}^n \left(x_i + [\bar p_i \wedge y_i(q)] - y_i(q)\right)$.
Additionally, consider $V: [\underline q,\overline q] \to \bbr$ to be provided by
\[V(q) := \gamma_0^\T q + \Alpha(q) - \sum_{k = 1}^m \int_1^{q_k} f_k^{-1}(p) dp\]
where $\Alpha: [\underline q,\overline q] \to \bbr$ is defined as the multivariate function with gradient $\alpha$.  For the moment we will assume that $\Alpha$ exists, at the end of this proof we show this is the case under the restricted $m = 2$ asset setting.
With this construction we find that $\frac{d}{dt}V(q_t)$ is negative semidefinite for any trajectory $q_t$ of the t\^atonnement process, i.e.,
\begin{align*}
\frac{d}{dt}V(q_t) &= \left(\gamma_0 + \alpha(q_t) - F^{-1}(q_t)\right)^\T \left(F(\gamma_0 + \alpha(q_t)) - q_t\right) \leq 0.
\end{align*}
This is because $[\gamma_0 + \alpha(q) - F^{-1}(q)]_k \geq 0$ if and only if $F_k(\gamma_0 + \alpha(q)) \leq q_k$ since $F_k(\gamma_0 + \alpha(q)) = F_k(\gamma_0 + \alpha(q) - F^{-1}(q) + F^{-1}(q))$ and the monotonicity of the inverse demand function.  In fact, $\frac{d}{dt}V(q_t) = 0$ if and only if $q_t$ is an equilibrium price due to the same preceding argument.  By LaSalle's invariance principle (see, e.g., \cite[Theorem 4.4]{khalil2002}), the set of accumulation points of any trajectory is equivalent to the set of equilibrium prices.  Further, since $q_t \in [\underline q,\overline q]$ for every time $t \geq 0$, the t\^atonnement process approaches the set of clearing prices as $t \to \infty$.

Finally, we wish to guarantee the existence of the function $\Alpha: [\underline q,\overline q] \to \bbr$ so that its gradient is equal to $\alpha$.  For this purpose we restrict ourselves to the $m = 2$ asset setting since, functionally, we can consider our input $q$ to be its second argument only (due to our choice of num\'{e}raire).  That is, we can consider instead $V(q_2) = \gamma_{0,2}q_2 + \int_1^{q_2} \alpha_2(p) dp - \int_1^{q_2} f_2^{-1}(p) dp$.
\end{proof}


\bibliographystyle{plainnat}
\bibliography{bibtex2}

\begin{thebibliography}{45}
\providecommand{\natexlab}[1]{#1}
\providecommand{\url}[1]{\texttt{#1}}
\expandafter\ifx\csname urlstyle\endcsname\relax
  \providecommand{\doi}[1]{doi: #1}\else
  \providecommand{\doi}{doi: \begingroup \urlstyle{rm}\Url}\fi

\bibitem[Aliprantis and Border(2007)]{AB07}
Charalambos~D. Aliprantis and Kim~C. Border.
\newblock \emph{Infinite Dimensional Analysis: A Hitchhiker's Guide}.
\newblock Springer, 2007.

\bibitem[Allayannis et~al.(2003)Allayannis, Brown, and Klapper]{ABK03}
George Allayannis, Gregory~W. Brown, and Leora~F. Klapper.
\newblock Capital structure and financial risk: Evidece from foreign debt use
  in {E}ast {A}sia.
\newblock \emph{The Journal of Finance}, 58\penalty0 (6):\penalty0 2667--2710,
  2003.

\bibitem[Amini et~al.(2015)Amini, Filipovi\'{c}, and Minca]{AFM13}
Hamed Amini, Damir Filipovi\'{c}, and Andreea Minca.
\newblock Systemic risk with central counterparty clearing.
\newblock Swiss {Finance} {Institute} {Research} {Paper} {No.} 13-34, Swiss
  Finance Institute, 2015.

\bibitem[Amini et~al.(2016)Amini, Filipovi\'{c}, and Minca]{AFM16}
Hamed Amini, Damir Filipovi\'{c}, and Andreea Minca.
\newblock Uniqueness of equilibrium in a payment system with liquidation costs.
\newblock \emph{Operations Research Letters}, 44\penalty0 (1):\penalty0 1--5,
  2016.

\bibitem[Armenti et~al.(2018)Armenti, Cr{\'e}pey, Drapeau, and
  Papapantoleon]{ACDP2015}
Yannick Armenti, St{\'e}phane Cr{\'e}pey, Samuel Drapeau, and Antonis
  Papapantoleon.
\newblock Multivariate shortfall risk allocation and systemic risk.
\newblock \emph{SIAM Journal on Financial Mathematics}, 9\penalty0
  (1):\penalty0 90--126, 2018.

\bibitem[Aubin and Frankowska(1990)]{AF90}
Jean~Pierre Aubin and H\'{e}l\`{e}ne Frankowska.
\newblock \emph{{Set-valued analysis}}.
\newblock Systems \& control. Birkh{\"a}user, 1990.

\bibitem[Banerjee and Feinstein(2019)]{BF18}
Tathagata Banerjee and Zachary Feinstein.
\newblock Impact of contingent payments on systemic risk in financial networks.
\newblock \emph{Mathematics and Financial Economics}, 2019.
\newblock DOI: 10.1007/s11579-019-00239-9.

\bibitem[Banerjee et~al.(2018)Banerjee, Bernstein, and Feinstein]{BBF18}
Tathagata Banerjee, Alex Bernstein, and Zachary Feinstein.
\newblock Dynamic clearing and contagion in financial networks.
\newblock 2018.
\newblock Working paper.

\bibitem[Battiston et~al.(2016)Battiston, Caldarelli, and D'Errico]{BCD16}
Stefano Battiston, Guido Caldarelli, and Marco D'Errico.
\newblock \emph{The Financial System as a Nexus of Interconected Networks},
  pages 195--229.
\newblock Springer International Publishing, 2016.

\bibitem[Biagini et~al.(2019)Biagini, Fouque, Frittelli, and
  Meyer-Brandis]{BFFM2015}
Francesca Biagini, Jean-Pierre Fouque, Marco Frittelli, and Thilo
  Meyer-Brandis.
\newblock A unified approach to systemic risk measures via acceptance sets.
\newblock \emph{Mathematical Finance}, 29\penalty0 (1):\penalty0 329--367,
  2019.

\bibitem[Bluhm et~al.(2014)Bluhm, Faia, and Krahnen]{BFK14}
Marcel Bluhm, Ester Faia, and Jan~Pieter Krahnen.
\newblock Monetary policy implementation in an interbank network: Effects on
  systemic risk.
\newblock {SAFE} {W}orking {P}aper {N}o.\ 46, 2014.

\bibitem[Bookstaber and Kenett(2016)]{BK16OFRbrief}
Richard Bookstaber and Dror~Y. Kenett.
\newblock Looking deeper, seeing more: A multilayer map of the financial
  system.
\newblock \emph{OFR Brief}, 2016.
\newblock Brief no. 16-06.

\bibitem[Capponi and Chen(2015)]{CC15}
Agostino Capponi and Peng-Chu Chen.
\newblock Systemic risk mitigation in financial networks.
\newblock \emph{Journal of Economic Dynamics \& Control}, 58:\penalty0
  152--166, 2015.

\bibitem[Capponi and Larsson(2015)]{CL15}
Agostino Capponi and Martin Larsson.
\newblock Price contagion through balance sheet linkages.
\newblock \emph{Review of Asset Pricing Studies}, 5\penalty0 (2):\penalty0
  227--253, 2015.

\bibitem[Capponi et~al.(2016)Capponi, Chen, and Yao]{CCY16}
Agostino Capponi, Peng-Chu Chen, and David~D. Yao.
\newblock Liability concentration and systemic losses in financial networks.
\newblock \emph{Operations Research}, 64\penalty0 (5):\penalty0 1121--1134,
  2016.

\bibitem[Chen et~al.(2013)Chen, Iyengar, and Moallemi]{chen2013axiomatic}
Chen Chen, Garud Iyengar, and Ciamac~C. Moallemi.
\newblock An axiomatic approach to systemic risk.
\newblock \emph{Management Science}, 59\penalty0 (6):\penalty0 1373--1388,
  2013.

\bibitem[Chen et~al.(2016)Chen, Liu, and Yao]{CLY14}
Nan Chen, Xin Liu, and David~D. Yao.
\newblock An optimization view of financial systemic risk modeling: The network
  effect and the market liquidity effect.
\newblock \emph{Operations Research}, 64\penalty0 (5):\penalty0 1089--1108,
  2016.

\bibitem[Cifuentes et~al.(2005)Cifuentes, Shin, and Ferrucci]{CFS05}
Rodrigo Cifuentes, Hyun~Song Shin, and Gianluigi Ferrucci.
\newblock Liquidity risk and contagion.
\newblock \emph{Journal of the European Economic Association}, 3\penalty0
  (2-3):\penalty0 556--566, 2005.

\bibitem[Cont et~al.(2013)Cont, Moussa, and Santos]{CMS10}
Rama Cont, Amal Moussa, and Edson Bastos~e Santos.
\newblock Network structure and systemic risk in banking systems.
\newblock In \emph{Handbook on Systemic Risk}, pages 327--368. Cambridge
  University Press, 2013.

\bibitem[Dell'Ariccia et~al.(2011)Dell'Ariccia, Laeven, and Marquez]{DLM11}
Giovanni Dell'Ariccia, Luc Laeven, and Robert Marquez.
\newblock Financial frictions, foreign currency borrowing, and systemic risk.
\newblock In \emph{12th Jacques Polak Annual Research Conference, Hosted by the
  International Monetary Fund, Washington, DC}, 2011.

\bibitem[Eisenberg and Noe(2001)]{EN01}
Larry Eisenberg and Thomas~H. Noe.
\newblock Systemic risk in financial systems.
\newblock \emph{Management Science}, 47\penalty0 (2):\penalty0 236--249, 2001.

\bibitem[Elliott et~al.(2014)Elliott, Golub, and Jackson]{EGJ14}
Matthew Elliott, Benjamin Golub, and Matthew~O. Jackson.
\newblock Financial networks and contagion.
\newblock \emph{American Economic Review}, 104\penalty0 (10):\penalty0
  3115--3153, 2014.

\bibitem[Elsinger(2009)]{E07}
Helmut Elsinger.
\newblock Financial networks, cross holdings, and limited liability.
\newblock \emph{{\"{O}sterrei}chische Nationalbank (Austrian Central Bank)},
  156, 2009.

\bibitem[Elsinger et~al.(2006)Elsinger, Lehar, and Summer]{ELS06}
Helmut Elsinger, Alfred Lehar, and Martin Summer.
\newblock Risk assessment for banking systems.
\newblock \emph{Management Science}, 52\penalty0 (9):\penalty0 1301--1314,
  2006.

\bibitem[Feinstein(2017)]{feinstein2015illiquid}
Zachary Feinstein.
\newblock Financial contagion and asset liquidation strategies.
\newblock \emph{Operations Research Letters}, 45\penalty0 (2):\penalty0
  109--114, 2017.

\bibitem[Feinstein and El-Masri(2017)]{feinstein2016leverage}
Zachary Feinstein and Fatena El-Masri.
\newblock The effects of leverage requirements and fire sales on financial
  contagion via asset liquidation strategies in financial networks.
\newblock \emph{Statistics \& Risk Modeling}, 34\penalty0 (3-4):\penalty0
  113--139, 2017.

\bibitem[Feinstein et~al.(2017)Feinstein, Rudloff, and
  Weber]{feinstein2014measures}
Zachary Feinstein, Birgit Rudloff, and Stefan Weber.
\newblock Measures of systemic risk.
\newblock \emph{SIAM Journal on Financial Mathematics}, 8\penalty0
  (1):\penalty0 672--708, 2017.

\bibitem[Gai and Kapadia(2010)]{GK10}
Prasanna Gai and Sujit Kapadia.
\newblock Contagion in financial networks.
\newblock Bank of England Working Papers 383, Bank of England, 2010.

\bibitem[Gandy and Veraart(2017)]{GV16}
Axel Gandy and Luitgard~A.M. Veraart.
\newblock A {B}ayesian methodology for systemic risk assessment in financial
  networks.
\newblock \emph{Management Science}, 63\penalty0 (12):\penalty0 4428--4446,
  2017.

\bibitem[Glasserman and Young(2015)]{GY14}
Paul Glasserman and H.~Peyton Young.
\newblock How likely is contagion in financial networks?
\newblock \emph{Journal of Banking and Finance}, 50:\penalty0 383--399, 2015.

\bibitem[Khalil(2002)]{khalil2002}
Hassan~K. Khalil.
\newblock \emph{Nonlinear Systems}.
\newblock Prentice Hall, 3rd edition, 2002.

\bibitem[Kromer et~al.(2016)Kromer, Overbeck, and Zilch]{kromer2013systemic}
Eduard Kromer, Ludger Overbeck, and Katrin Zilch.
\newblock Systemic risk measures on general measurable spaces.
\newblock \emph{Mathematical Methods of Operations Research}, 84\penalty0
  (2):\penalty0 323--357, 2016.

\bibitem[Kusnetsov and Veraart(2019)]{KV16}
Michael Kusnetsov and Luitgard A.~M. Veraart.
\newblock Interbank clearing in financial networks with multiple maturities.
\newblock \emph{SIAM Journal on Financial Mathematics}, 10\penalty0
  (1):\penalty0 37--67, 2019.

\bibitem[Montagna and Kok(2013)]{MK13}
Mattia Montagna and Christoffer Kok.
\newblock Multi-layered interbank model for assessing systemic risk.
\newblock \emph{Kiel Working Paper 1873}, 2013.

\bibitem[Nier et~al.(2007)Nier, Yang, Yorulmazer, and Alentorn]{NYYA07}
Erland Nier, Jing Yang, Tanju Yorulmazer, and Amadeo Alentorn.
\newblock Network models and financial stability.
\newblock \emph{Journal of Economic Dynamics and Control}, 31\penalty0
  (6):\penalty0 2033--2060, 2007.

\bibitem[Poledna et~al.(2015)Poledna, Molina-Borboa, Mart\'inez-Jaramillo,
  van~der Leij, and Thurner]{poledna15multilayer}
Sebastian Poledna, Jos\'e~Luis Molina-Borboa, Seraf\'in Mart\'inez-Jaramillo,
  Marco van~der Leij, and Stefan Thurner.
\newblock The multi-layer network nature of systemic risk and its implications
  for the costs of financial crises.
\newblock \emph{Journal of Financial Stability}, 20:\penalty0 70--81, 2015.

\bibitem[Quah(2007)]{Quah2007}
John K.-H. Quah.
\newblock The comparative statics of constrained optimization problems.
\newblock \emph{Econometrica}, 75\penalty0 (2):\penalty0 401--431, 2007.

\bibitem[Rogers and Veraart(2013)]{RV13}
L.~C.~G. Rogers and L.~A.~M. Veraart.
\newblock Failure and rescue in an interbank network.
\newblock \emph{Management Science}, 59\penalty0 (4):\penalty0 882--898, 2013.

\bibitem[Schachermayer(2004)]{S04}
Walter Schachermayer.
\newblock The fundamental theorem of asset pricing under proportional
  transaction costs in finite discrete time.
\newblock \emph{Mathematical Finance}, 14\penalty0 (1):\penalty0 19--48, 2004.

\bibitem[Schuldenzucker et~al.(2017{\natexlab{a}})Schuldenzucker, Seuken, and
  Battiston]{SSB16b}
Steffen Schuldenzucker, Sven Seuken, and Stefano Battiston.
\newblock Finding clearing payments in financial networks with credit default
  swaps is ppad-complete.
\newblock In \emph{LIPIcs-Leibniz International Proceedings in Informatics},
  volume~67. Schloss Dagstuhl-Leibniz-Zentrum fuer Informatik,
  2017{\natexlab{a}}.

\bibitem[Schuldenzucker et~al.(2017{\natexlab{b}})Schuldenzucker, Seuken, and
  Battiston]{SSB17}
Steffen Schuldenzucker, Sven Seuken, and Stefano Battiston.
\newblock Default ambiguity: Credit default swaps create new systemic risks in
  financial networks.
\newblock 2017{\natexlab{b}}.
\newblock Working Paper.

\bibitem[Upper(2011)]{U11}
Christian Upper.
\newblock Simulation methods to assess the danger of contagion in interbank
  markets.
\newblock \emph{Journal of Financial Stability}, 7\penalty0 (3):\penalty0
  111--125, 2011.

\bibitem[Veraart(2018)]{V17distress}
Luitgard~A.M.\ Veraart.
\newblock Distress and default contagion in financial networks.
\newblock 2018.
\newblock Working Paper.

\bibitem[Weber and Weske(2017)]{AW_15}
Stefan Weber and Kerstin Weske.
\newblock The joint impact of bankruptcy costs, fire sales and cross-holdings
  on systemic risk in financial networks.
\newblock \emph{Probability, Uncertainty and Quantitative Risk}, 2\penalty0
  (1):\penalty0 9, June 2017.

\bibitem[Zeidler(1986)]{Z86}
Eberhard Zeidler.
\newblock \emph{Nonlinear Functional Analysis and its Applications {I}:
  Fixed-Point Theorems}.
\newblock Springer-Verlag, 1986.

\end{thebibliography}

\end{document}